\numberwithin{equation}{section}
\numberwithin{figure}{section}
\theoremstyle{plain}
\newtheorem{thm}{\protect\theoremname}
  \theoremstyle{definition}
  \newtheorem{defn}[thm]{\protect\definitionname}
  \providecommand{\definitionname}{$BDj(B$B5A(B}
\providecommand{\theoremname}{$BDj(B$BM}(B}
\begin{document}

\title{Topological approach to solve P versus NP}

\author{$B>.(B$BNS(B $B90(B$BFs(B}
\maketitle
\begin{abstract}
This paper talks about difference between P and NP by using topological
space that mean resolution principle. I pay attention to restrictions
of antecedent and consequent in resolution, and show what kind of
influence the restrictions have for difference of structure between
P and NP regarding relations of relation.

First, I show the restrictions of antecedent and consequent in resolution
principle. Antecedents connect each other, and consequent become a
linkage between these antecedents. And we can make consequent antecedents
product as by using some resolutions which have same joint variable.
We can determine these consequents reducible and irreducible.

Second, I introduce RCNF that mean topology of resolution principle
in CNF. RCNF is HornCNF and that variable values are presence of restrictions
of CNF formula clauses. RCNF is P-Complete.

Last, I introduce TCNF that have 3CNF's character which relate 2 variables
relations with 1 variable. I show CNF complexity by using CCNF that
combine some TCNF. TCNF is NP-Complete and product irreducible. I
introduce CCNF that connect TCNF like Moore graph. We cannot reduce
CCNF to RCNF with polynomial size. Therefore, TCNF is not in P.
\end{abstract}

\section{$B35(B$BMW(B}

$BK\(B$BO@(B$BJ8(B$B$G(B$B$O(B$BF3(B$B=P(B$B86(B$BM}(B$B$r(B$BI=(B$B$9(B$B0L(B$BAj(B$B6u(B$B4V(B$B$r(B$BMQ(B$B$$(B$B$F(BP$B$H(BNP$B$N(B$B0c(B$B$$(B$B$r(B$B<((B$B$9(B$B!#(B$B$=(B$B$N(B$B$?(B$B$a(B$B$K(B$BF3(B$B=P(B$B86(B$BM}(B$B$N(B$BA0(B$B7o(B$B$H(B$B8e(B$B7o(B$B$N(B$B@)(B$BLs(B$B$K(B$BCe(B$BL\(B$B$7(B$B!"(B$B$=(B$B$N(B$B@)(B$BLs(B$B$,(B$B4X(B$B78(B$B$N(B$B4X(B$B78(B$B$K(B$B4X(B$B$9(B$B$k(BP$B$H(BNP$B$N(B$B0c(B$B$$(B$B$K(B$B$I(B$B$N(B$B$h(B$B$&(B$B$J(B$B1F(B$B6A(B$B$r(B$BM?(B$B$((B$B$k(B$B$N(B$B$+(B$B$r(B$B<((B$B$9(B$B!#(B

$B$^(B$B$:(B$B;O(B$B$a(B$B$K(B$B!"(B$BF3(B$B=P(B$B86(B$BM}(B$B$K(B$B$*(B$B$1(B$B$k(B$BA0(B$B7o(B$B$H(B$B8e(B$B7o(B$B$N(B$B@)(B$BLs(B$B$r(B$B<((B$B$9(B$B!#(B$BF3(B$B=P(B$B$N(B$BA0(B$B7o(B$B$O(B$B8_(B$B$$(B$B$K(B$B@\(B$B$7(B$B!"(B$B8e(B$B7o(B$B$O(B$BA0(B$B7o(B$BF1(B$B;N(B$B$N(B$BO"(B$B7k(B$BIt(B$B$H(B$B$J(B$B$k(B$B!#(B$B$^(B$B$?(B$BF1(B$B$8(B$B@\(B$BB3(B$BJQ(B$B?t(B$B$K(B$B$h(B$B$k(B$BF3(B$B=P(B$B$r(B$BJ#(B$B?t(B$B$^(B$B$H(B$B$a(B$B$k(B$B$3(B$B$H(B$B$K(B$B$h(B$B$j(B$B!"(B$B8e(B$B7o(B$B$O(B$BA0(B$B7o(B$B$N(B$BD>(B$B@Q(B$B$H(B$B$J(B$B$k(B$B!#(B$B$3(B$B$N(B$BD>(B$B@Q(B$B$r(B$BMQ(B$B$$(B$B$F(B$B8e(B$B7o(B$B$N(B$B2D(B$BLs(B$B!&(B$B4{(B$BLs(B$B@-(B$B$r(B$BDj(B$B$a(B$B$k(B$B$3(B$B$H(B$B$,(B$B$G(B$B$-(B$B$k(B$B!#(B

$B<!(B$B$K(B$B!"(B$BO@(B$BM}(B$B<0(B$B$K(B$BF3(B$B=P(B$B86(B$BM}(B$B$N(B$B0L(B$BAj(B$B$r(B$BI=(B$B$9(B$BO@(B$BM}(B$B<0(BRCNF$B$r(B$B9=(B$B@.(B$B$9(B$B$k(B$B!#(BRCNF$B$O(BHornCNF$B$G(B$B$"(B$B$j(B$B!"(B$B$=(B$B$N(B$BJQ(B$B?t(B$B$N(B$BCM(B$B$O(BCNF$B$N(B$B@a(B$B$N(B$B@)(B$BLs(B$B$N(B$BM-(B$BL5(B$B$r(B$B0U(B$BL#(B$B$9(B$B$k(B$B!#(BRCNF$B$O(BP$B40(B$BA4(B$B$H(B$B$J(B$B$k(B$B!#(B

$B:G(B$B8e(B$B$K(B$B!"(B2$BJQ(B$B?t(B$B$N(B$B4X(B$B78(B$B$r(B1$BJQ(B$B?t(B$B$H(B$B4X(B$BO"(B$BIU(B$B$1(B$B$k(B$B$H(B$B$$(B$B$&(B3CNF$B$N(B$BFC(B$BD'(B$B$r(B$B;}(B$B$D(BTCNF$B$r(B$BDj(B$B$a(B$B!"(BTCNF$B$r(B$BAH(B$B$_(B$B9g(B$B$o(B$B$;(B$B$?(BCCNF$B$K(B$B$h(B$B$j(BCNF$B$N(B$BJ#(B$B;((B$B@-(B$B$r(B$B<((B$B$9(B$B!#(BTCNF$B$O(BNP$B40(B$BA4(B$B$G(B$B$"(B$B$j(B$BD>(B$B@Q(B$B4{(B$BLs(B$B$H(B$B$J(B$B$k(B$B!#(BTCNF$B$r(BMoore~graph$B>u(B$B$K(B$BO"(B$B7k(B$B$7(B$B$?(BCCNF$B$r(B$B9M(B$B$((B$B$k(B$B$H(B$B!"(BCCNF$B$r(BRCNF$B$K(B$B4T(B$B85(B$B$7(B$B$?(BCNF$B$O(B$BB?(B$B9`(B$B<0(B$B5,(B$BLO(B$B$K(B$BG<(B$B$^(B$B$i(B$B$J(B$B$$(B$B!#(B$B$h(B$B$C(B$B$F(BCCNF$B$O(BP$B$K(B$B4^(B$B$^(B$B$l(B$B$J(B$B$$(B$B!#(B

$B$J(B$B$*(B$B!"(B$B;2(B$B9M(B$BJ8(B$B8

\section{$B=`(B$BHw(B}

$BK\(B$BO@(B$BJ8(B$BCf(B$B$G(B$B$O(B$B2<(B$B5-(B$B$N(B$BDL(B$B$j(BCNF$B$N(B$B5-(B$B=R(B$B$r(B$BDj(B$B$a(B$B$k(B$B!#(B
\begin{defn}
\label{def: CNF}$BO@(B$BM}(B$B<0(B$F\in CNF$$B$K(B$B$D(B$B$$(B$B$F(B$B!"(B$F$$B$G(B$B??(B$B$H(B$B$J(B$B$k(B$B??(B$BM}(B$BCM(B$B3d(B$BEv(B$B$N(B$B=8(B$B9g(B$B$r(B$\left[F\right]$$B!"(B$F$$B$G(B$B56(B$B$H(B$B$J(B$B$k(B$B??(B$BM}(B$BCM(B$B3d(B$BEv(B$B$N(B$B=8(B$B9g(B$B$r(B$\overline{\left[F\right]}$$B!"(B$B@a(B$c\in F$$B$G(B$B$N(B$B$_(B$B56(B$B$H(B$B$J(B$B$k(B$B??(B$BM}(B$BCM(B$B3d(B$BEv(B$B$N(B$B=8(B$B9g(B$B$r(B$\widehat{\left[c\right]}$$B$G(B$BI=(B$B$9(B$B!#(B$F,\left[F\right],\overline{\left[F\right]},\widehat{\left[c\right]}$$B$N(B$BG;(B$BEY(B$B$r(B$\left|F\right|,\left|\left[F\right]\right|,\left|\overline{\left[F\right]}\right|,\left|\widehat{\left[c\right]}\right|$$B$H(B$B$9(B$B$k(B$B!#(B

$c$$B$N(B$B9=(B$B@.(B$B$r(B$BE:(B$B;z(B$B$G(B$BI=(B$B$9(B$B!#(B$BNc(B$B$((B$B$P(B$c_{i\cdots\overline{j}\cdots}=\left(x_{i}\vee\cdots\overline{x_{j}}\vee\cdots\right)$$B$H(B$B$9(B$B$k(B$B!#(B$BBg(B$BJ8(B$B;z(B$B$N(B$BE:(B$B;z(B$B$O(B$BJQ(B$B?t(B$B$N(B$B9N(B$BDj(B$B!&(B$BH](B$BDj(B$B$$(B$B$:(B$B$l(B$B$+(B$B$G(B$B$"(B$B$k(B$B$b(B$B$N(B$B$H(B$B$9(B$B$k(B$B!#(B$BNc(B$B$((B$B$P(B$c_{I},c_{\overline{I}}$$B$O(B$c_{I},c_{\overline{I}}\in\left\{ c_{i},c_{\overline{i}}\right\} ,c_{I}\neq c_{\overline{I}}$$B$H(B$B$J(B$B$k(B$B!#(B

$BF3(B$B=P(B$B$K(B$B$*(B$B$$(B$B$F(B$BA0(B$B7o(B$B$K(B$B9N(B$BDj(B$BJQ(B$B?t(B$B5Z(B$B$S(B$BH](B$BDj(B$BJQ(B$B?t(B$B$H(B$B$7(B$B$F(B$B4^(B$B$^(B$B$l(B$B!"(B$B8e(B$B7o(B$B$K(B$B4^(B$B$^(B$B$l(B$B$J(B$B$$(B$BJQ(B$B?t(B$B$r(B$B@\(B$B9g(B$BJQ(B$B?t(B$B$H(B$B8F(B$B$V(B$B!#(B$B@\(B$B9g(B$BJQ(B$B?t(B$B$,(B$B9N(B$BDj(B$BJQ(B$B?t(B$B$N(B$BA0(B$B7o(B$B$r(B$B9N(B$BDj(B$BA0(B$B7o(B$B!"(B$BH](B$BDj(B$BJQ(B$B?t(B$B$N(B$BA0(B$B7o(B$B$r(B$BH](B$BDj(B$BA0(B$B7o(B$B$H(B$B8F(B$B$V(B$B!#(B$B@\(B$B9g(B$BJQ(B$B?t(B$B$N(B$BF1(B$B$8(B$BJ#(B$B?t(B$B$N(B$BF3(B$B=P(B$B$r(B$B$^(B$B$H(B$B$a(B$B$F(B$B07(B$B$&(B$B>l(B$B9g(B$B!"(B$B9N(B$BDj(B$BA0(B$B7o(B$B!"(B$BH](B$BDj(B$BA0(B$B7o(B$B!"(B$B8e(B$B7o(B$B$O(B$B$=(B$B$l(B$B$>(B$B$l(B$B@a(B$B$N(B$B=8(B$B9g(B$B$H(B$B$J(B$B$k(B$B!#(B
\end{defn}
$BK\(B$BO@(B$BJ8(B$B$G(B$B$O(B$BO@(B$BM}(B$B<0(B$B$N(B$B5w(B$BN
\begin{defn}
\label{def: =008AD6=007406=005F0F=00306E=008DDD=0096E2=007A7A=009593}$BO@(B$BM}(B$B<0(B$B$N(B$B5w(B$BN
\end{defn}

\section{$BF3(B$B=P(B}

$BF3(B$B=P(B$B$N(B$B9=(B$BB$(B$B$r(B$B@0(B$BM}(B$B$9(B$B$k(B$B!#(B
\begin{thm}
\label{thm: =00524D=004EF6=00540C=0058EB=00306E=0095A2=004FC2}$BF3(B$B=P(B$B$K(B$B$*(B$B$$(B$B$F(B$B9N(B$BDj(B$BA0(B$B7o(B$B$H(B$BH](B$BDj(B$BA0(B$B7o(B$B$O(B$B8_(B$B$$(B$B$K(B$B@\(B$B$9(B$B$k(B$B!#(B\end{thm}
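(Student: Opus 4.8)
The plan is to turn this topological assertion into an explicit computation of falsifying assignments and then to exhibit, for the two antecedents, a pair of falsifying assignments lying at the minimal distance permitted by the metric of Definition~\ref{def: =008AD6=007406=005F0F=00306E=008DDD=0096E2=007A7A=009593}; producing such a pair is exactly what it should mean for the two clauses to be in contact. First I would fix notation for a single resolution step on the joint variable. Writing the joint variable as $x_i$, the asserted antecedent has the shape $c_I = x_i \vee D_I$ and the negated antecedent the shape $c_{\overline{I}} = \overline{x_i} \vee D_{\overline{I}}$, where $D_I$ and $D_{\overline{I}}$ gather the literals that survive into the consequent $c = D_I \vee D_{\overline{I}}$, from which the joint variable is by definition absent. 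Since $\widehat{[\,\cdot\,]}$ isolates the assignments falsifying a single designated clause, I would read off that an assignment lies in $\widehat{[c_I]}$ exactly when it sets $x_i = 0$ and falsifies every literal of $D_I$, whereas it lies in $\widehat{[c_{\overline{I}}]}$ exactly when it sets $x_i = 1$ and falsifies every literal of $D_{\overline{I}}$. The decisive point is that the joint variable is pinned to opposite truth values on the two sides.

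Next I would build the touching pair directly out of the consequent. Choose any assignment $\alpha$ falsifying $c$; because $c = D_I \vee D_{\overline{I}}$, this single $\alpha$ falsifies $D_I$ and $D_{\overline{I}}$ simultaneously while leaving $x_i$ unconstrained. Extending $\alpha$ by $x_i = 0$ yields a point of $\widehat{[c_I]}$ and extending the same $\alpha$ by $x_i = 1$ yields a point of $\widehat{[c_{\overline{I}}]}$; these two extensions agree on every coordinate but $x_i$, so their distance is the smallest positive value the metric can assume, which is precisely the condition for $c_I$ and $c_{\overline{I}}$ to be in contact. Applying this to every resolution sharing the joint variable and taking unions over the clauses on each side then promotes the statement from single clauses to the asserted-antecedent and negated-antecedent sets, in accordance with the set-valued convention of Definition~\ref{def: CNF}.

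The hard part will not be this generic step but the degenerate configurations. If the consequent is valid, so that $D_I \vee D_{\overline{I}}$ is a tautology and $\widehat{[c]} = \emptyset$, or if it collapses to the empty clause, the assignment $\alpha$ need not exist, and I would have to argue that contact is either vacuous or inherited from a smaller resolution on the surviving literals. I also have to check that the qualifier \emph{only} in the definition of $\widehat{[\,\cdot\,]}$ causes no interference, i.e. that $\alpha$ can be chosen so that both of its extensions falsify no other clause of the ambient formula. Finally I must confirm that the distance invoked here is literally the one fixed in the metric-space definition, so that ``minimal positive distance'' coincides with the paper's notion of being in contact and the conclusion reproduces the statement verbatim.
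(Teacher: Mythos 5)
Your generic step and the paper's proof are aimed at different halves of the same fact, and the half you defer to ``the hard part'' is the only half the paper actually proves. The paper's argument is a short proof by contradiction on the number of joint variables: if the two antecedents shared no complementary literal there would be no resolution at all, and if they shared two or more, say on $x_I$ and $x_J$, the consequent would retain $x_J\vee\overline{x_J}$ and collapse to $\top$, contradicting the conditions of resolution; hence there is exactly one joint variable, which is what ``the antecedents touch'' is taken to mean. Your construction --- pick $\alpha$ falsifying $c=D_I\vee D_{\overline{I}}$ and extend it by $x_i=0$ and by $x_i=1$ to obtain falsifying assignments of the two antecedents at distance one --- is a legitimate and more explicit way to cash out ``touching'' in the metric space of Definition~\ref{def: =008AD6=007406=005F0F=00306E=008DDD=0096E2=007A7A=009593}, but it silently presupposes the conclusion: writing the antecedents as $x_i\vee D_I$ and $\overline{x_i}\vee D_{\overline{I}}$ with a falsifiable $D_I\vee D_{\overline{I}}$ already asserts that there is exactly one complementary pair between them, since any second pair would sit inside $D_I\vee D_{\overline{I}}$ and make it a tautology.

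Consequently the tautological-consequent case is not a degenerate configuration to be patched afterwards; it is the entire content of the theorem. Your proposed fallback --- that contact is ``either vacuous or inherited from a smaller resolution'' --- also points the wrong way: if $D_I\vee D_{\overline{I}}$ were a tautology, the two antecedents would clash on at least two variables, every falsifying assignment of one would differ from every falsifying assignment of the other in at least two coordinates, and the clauses would demonstrably \emph{not} touch. The correct move is the paper's: such a step is excluded by the conditions of resolution itself (one resolves on a single pivot and does not admit a tautological resolvent), so the case never arises and needs no rescue. Your third worry, about the ``only'' in $\widehat{\left[c\right]}$, is best avoided rather than answered: the statement and the paper's proof never invoke falsifying sets relative to the whole formula, and if you work with $\overline{\left[c_I\right]}$ and $\overline{\left[c_{\overline{I}}\right]}$, the falsifying sets of the individual clauses, your adjacency construction needs no hypotheses on the ambient $F$. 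With the pivot-uniqueness step supplied by the paper's contradiction argument, your construction does yield a more concrete, assignment-level reading of ``touching'' than the paper's own proof, which never exhibits any assignments at all.
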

\begin{proof}
$BGX(B$BM}(B$BK!(B$B$K(B$B$h(B$B$j(B$B<((B$B$9(B$B!#(B$B@\(B$B9g(B$BJQ(B$B?t(B$B$,(B0$B$^(B$B$?(B$B$O(B2$B0J(B$B>e(B$B$H(B$B$J(B$B$k(B$BF3(B$B=P(B$B$,(B$BB8(B$B:_(B$B$9(B$B$k(B$B$H(B$B2>(B$BDj(B$B$9(B$B$k(B$B!#(B$B@\(B$B9g(B$BJQ(B$B?t(B$B$,(B0$B$N(B$B>l(B$B9g(B$B$O(B$BL@(B$B$i(B$B$+(B$B$K(B$BF3(B$B=P(B$B$N(B$B>r(B$B7o(B$B$H(B$BL7(B$B=b(B$B$9(B$B$k(B$B!#(B$B$^(B$B$?(B$B@\(B$B9g(B$BJQ(B$B?t(B$B$,(B2$B0J(B$B>e(B$B$N(B$B>l(B$B9g(B$B$O(B$B!"(B$c_{IJp\cdots}\vee c_{\overline{I}\overline{J}q\cdots}\rightarrow c_{Jp\cdots\overline{J}q\cdots}=\top$$B$h(B$B$j(B$BF3(B$B=P(B$B$N(B$B>r(B$B7o(B$B$H(B$BL7(B$B=b(B$B$9(B$B$k(B$B!#(B$B$h(B$B$C(B$B$F(B$BGX(B$BM}(B$BK!(B$B$h(B$B$j(B$BDj(B$BM}(B$B$,(B$B<((B$B$5(B$B$l(B$B$?(B$B!#(B\end{proof}
\begin{thm}
\label{thm: =00524D=004EF6=003068=005F8C=004EF6=00306E=0095A2=004FC2}$c_{ip\cdots},c_{\overline{i}q\cdots}\in F\in CNF$$B$K(B$B$*(B$B$1(B$B$k(B$BF3(B$B=P(B$c_{ip\cdots}\wedge c_{\overline{i}q\cdots}\rightarrow c_{p\cdots q\cdots}$$B$K(B$B$*(B$B$$(B$B$F(B$B!"(B$\overline{\left[c_{p\cdots q\cdots}\right]}$$B$O(B$B!"(B$BA0(B$B7o(B$B$H(B$B8e(B$B7o(B$B$N(B$B8r(B$BE@(B$\overline{\left[c_{ip\cdots q\cdots}\right]}\cap\overline{\left[c_{\overline{i}p\cdots q\cdots}\right]}$$B$K(B$B$*(B$B$1(B$B$k(B$\overline{\left[c_{ip\cdots}\right]},\overline{\left[c_{\overline{i}q\cdots}\right]}$$B$N(B$BO"(B$B7k(B$BIt(B$B$H(B$B$J(B$B$k(B$B!#(B\end{thm}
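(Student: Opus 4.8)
The plan is to turn each set $\overline{[\,\cdot\,]}$ into an explicit subcube of the assignment space and then read off the asserted decomposition. By Definition \ref{def: CNF} a clause is falsified exactly by the assignments that set every one of its literals to false, so each $\overline{[\,\cdot\,]}$ is the subcube obtained by fixing the coordinates named in the subscript (the sign choosing $0$ or $1$) and leaving all other coordinates free. Writing $P$ and $Q$ for the variable sets underlying $p\cdots$ and $q\cdots$, I would record the three basic sets: $\overline{[c_{ip\cdots}]}$ fixes $x_{i}=0$ and $P$ while leaving $Q$ free; $\overline{[c_{\overline{i}q\cdots}]}$ fixes $x_{i}=1$ and $Q$ while leaving $P$ free; and $\overline{[c_{p\cdots q\cdots}]}$ fixes both $P$ and $Q$ while leaving $x_{i}$ free.

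First I would establish the two identities $\overline{[c_{ip\cdots q\cdots}]}=\overline{[c_{ip\cdots}]}\cap\overline{[c_{p\cdots q\cdots}]}$ and $\overline{[c_{\overline{i}p\cdots q\cdots}]}=\overline{[c_{\overline{i}q\cdots}]}\cap\overline{[c_{p\cdots q\cdots}]}$; this is precisely the sense in which each of $\overline{[c_{ip\cdots q\cdots}]}$ and $\overline{[c_{\overline{i}p\cdots q\cdots}]}$ is an intersection of one antecedent with the consequent. Both follow at once from the subcube description, since intersecting two subcubes means taking the union of their fixed coordinate sets, and those unions are consistent here. At this point I would invoke Theorem \ref{thm: =00524D=004EF6=00540C=0058EB=00306E=0095A2=004FC2}, which guarantees that $x_{i}$ is the unique joint variable: this is exactly what prevents a sign clash between $P$ and $Q$ that would collapse the consequent to a tautology and make the sets degenerate.

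Next I would prove the decomposition $\overline{[c_{p\cdots q\cdots}]}=\overline{[c_{ip\cdots q\cdots}]}\sqcup\overline{[c_{\overline{i}p\cdots q\cdots}]}$ by splitting the consequent subcube on the free coordinate $x_{i}$, together with the containments $\overline{[c_{ip\cdots q\cdots}]}\subseteq\overline{[c_{ip\cdots}]}$ and $\overline{[c_{\overline{i}p\cdots q\cdots}]}\subseteq\overline{[c_{\overline{i}q\cdots}]}$, which hold because adjoining disjuncts to a clause only shrinks its falsifying set. Together these say the consequent set meets each antecedent set in one of its two $x_{i}$-slices, so $\overline{[c_{p\cdots q\cdots}]}$ is the set that bridges $\overline{[c_{ip\cdots}]}$ and $\overline{[c_{\overline{i}q\cdots}]}$; the two bridging slices are parallel and differ only in the coordinate $x_{i}$, hence lie at unit distance in the metric of Definition \ref{def: =008AD6=007406=005F0F=00306E=008DDD=0096E2=007A7A=009593}, while Theorem \ref{thm: =00524D=004EF6=00540C=0058EB=00306E=0095A2=004FC2} certifies that the two antecedents are adjacent, so that a single linkage indeed suffices.

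The main obstacle is conceptual rather than computational. As genuine sets of assignments the two slices $\overline{[c_{ip\cdots q\cdots}]}$ and $\overline{[c_{\overline{i}p\cdots q\cdots}]}$ are disjoint, since they sit in the opposite halves $x_{i}=0$ and $x_{i}=1$, so the literal set-intersection written in the statement is empty and the entire force of the claim lies in reading the ``intersection'' and the ``connected part'' through the distance structure of Definition \ref{def: =008AD6=007406=005F0F=00306E=008DDD=0096E2=007A7A=009593} rather than through $\cap$. The delicate step, and where I expect the real effort, is to show from that definition that the stated interface is exactly the adjacency locus of the two slices and that $\overline{[c_{p\cdots q\cdots}]}$ is the \emph{minimal} connected set realizing the linkage; making the notion of connected part precise and verifying this minimality is the crux.
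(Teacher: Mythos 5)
The paper's own ``proof'' of this theorem is the single sentence ``it is obvious from the definition of resolution,'' so there is no argument there to compare against; your subcube computation supplies the content that the paper omits. The identities you propose are correct: $\overline{\left[c_{p\cdots q\cdots}\right]}=\overline{\left[c_{ip\cdots q\cdots}\right]}\cup\overline{\left[c_{\overline{i}p\cdots q\cdots}\right]}$ by splitting on the free coordinate $x_{i}$, the two pieces are disjoint and differ only in that coordinate, and $\overline{\left[c_{ip\cdots q\cdots}\right]}\subseteq\overline{\left[c_{ip\cdots}\right]}$, $\overline{\left[c_{\overline{i}p\cdots q\cdots}\right]}\subseteq\overline{\left[c_{\overline{i}q\cdots}\right]}$ because adding literals shrinks the falsifying set; your appeal to Theorem \ref{thm: =00524D=004EF6=00540C=0058EB=00306E=0095A2=004FC2} to rule out a tautological resolvent is also the right safeguard. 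The obstacle you isolate at the end is genuine, but it is a defect of the theorem statement rather than of your argument: the displayed ``intersection'' $\overline{\left[c_{ip\cdots q\cdots}\right]}\cap\overline{\left[c_{\overline{i}p\cdots q\cdots}\right]}$ is empty as a set of assignments, the term rendered as ``connected part'' is never formally defined anywhere in the paper, and Definition \ref{def: =008AD6=007406=005F0F=00306E=008DDD=0096E2=007A7A=009593}, which should supply the metric making ``adjacent'' and ``linkage'' precise, has no body in the source. So the minimality and adjacency claims you correctly flag as the crux cannot be checked against the paper at all --- the paper simply asserts the whole statement as self-evident. Your proposal is as complete as the statement permits; the honest conclusion is that the theorem is not a precise assertion until the ``intersection point'' and ``connected part'' are defined in a way compatible with the fact that the literal set intersection is empty, and you are right not to pretend otherwise.
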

\begin{proof}
$BF3(B$B=P(B$B$N(B$BDj(B$B5A(B$B$h(B$B$j(B$B<+(B$BL@(B$B$G(B$B$"(B$B$k(B$B!#(B\end{proof}
\begin{defn}
\label{def: =007BC0=00306E=0076F4=007A4D}CNF$B$N(B$BD>(B$B@Q(B$B$r(B$B2<(B$B5-(B$B$N(B$BDL(B$B$j(B$BDj(B$B$a(B$B$k(B$B!#(B

$\left(c_{P\cdots}\wedge c_{Q\cdots}\right)\times\left(c_{R\cdots}\wedge c_{S\cdots}\right)=c_{P\cdots R\cdots}\wedge c_{Q\cdots R\cdots}\wedge c_{P\cdots S\cdots}\wedge c_{Q\cdots S\cdots}$\end{defn}
\begin{thm}
\label{thm: =005C0E=0051FA=00306E=0069CB=006210}$BF3(B$B=P(B$B$N(B$B8e(B$B7o(B$B$O(B$B!"(B$B@\(B$B9g(B$BJQ(B$B?t(B$B$r(B$B=|(B$B$$(B$B$?(B$B9N(B$BDj(B$BA0(B$B7o(B$B$H(B$BH](B$BDj(B$BA0(B$B7o(B$B$N(B$BD>(B$B@Q(B$B$H(B$B$J(B$B$k(B$B!#(B

$\left(c_{ip\cdots}\wedge c_{iq\cdots}\wedge\cdots\right)\wedge\left(c_{\overline{i}r\cdots}\wedge c_{\overline{i}s\cdots}\wedge\cdots\right)$

$\rightarrow c_{p\cdots r\cdots}\wedge c_{p\cdots s\cdots}\wedge\cdots\wedge c_{q\cdots r\cdots}\wedge c_{q\cdots s\cdots}\wedge\cdots=\left(c_{p\cdots}\wedge c_{q\cdots}\wedge\cdots\right)\times\left(c_{r\cdots}\wedge c_{s\cdots}\wedge\cdots\right)$\end{thm}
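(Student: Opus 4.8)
The plan is to split the assertion at its resolvent conjunction and to treat the entailment $\rightarrow$ and the equality $=$ as two separate claims. First I would handle the arrow by applying the single-resolution relation between antecedent and consequent (Theorem~\ref{thm: =00524D=004EF6=003068=005F8C=004EF6=00306E=0095A2=004FC2}) once to each pair $\left(c_{i\alpha\cdots},c_{\overline{i}\beta\cdots}\right)$ consisting of one positive antecedent (with tail $\alpha\cdots\in\left\{ p\cdots,q\cdots,\dots\right\}$) and one negative antecedent (with tail $\beta\cdots\in\left\{ r\cdots,s\cdots,\dots\right\}$). Each such pair lies inside the antecedent conjunction, so that single resolution is sound and its resolvent $c_{\alpha\cdots\beta\cdots}$ — the clause obtained by deleting the joint variable $x_{i}$ and taking the union of the surviving literals — is entailed by the whole left-hand side. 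A conjunction of entailed clauses is itself entailed, which yields the arrow.

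Next I would prove the equality. The immediate task is to read the $2\times2$ display of Definition~\ref{def: =007BC0=00306E=0076F4=007A4D} as the bilinear extension of the direct product over $\wedge$ on each side, so that for $m$ positive tails and $n$ negative tails the product $\left(c_{p\cdots}\wedge c_{q\cdots}\wedge\cdots\right)\times\left(c_{r\cdots}\wedge c_{s\cdots}\wedge\cdots\right)$ is, by definition, the conjunction of the $mn$ merged clauses $c_{a\cdots b\cdots}$, one for each choice of a left index $a$ and a right index $b$. With this reading fixed, the two sides are conjunctions indexed by the very same set of $mn$ pairs, and the pair $\left(c_{ia\cdots},c_{\overline{i}b\cdots}\right)$ is carried to the product clause $c_{a\cdots b\cdots}$, which Theorem~\ref{thm: =00524D=004EF6=003068=005F8C=004EF6=00306E=0095A2=004FC2} identifies with the resolvent of that pair. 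Since the indexing sets coincide and the clauses agree entrywise, the conjunctions are equal. To make this rigorous I would formalise it by induction on $m+n$, the base case $m=n=1$ being exactly the single resolution and the inductive step adding one tail together with the corresponding row or column of $n$ (resp.\ $m$) new resolvents, which is precisely the distributivity guaranteed by the bilinear reading.

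The main obstacle I anticipate is not the bijection but pinning down that resolution over $x_{i}$ produces \emph{exactly} these $mn$ resolvents, with none missing and none spurious. For completeness and the absence of extra resolvents I would lean on the connectivity of antecedents (Theorem~\ref{thm: =00524D=004EF6=00540C=0058EB=00306E=0095A2=004FC2}): a positive and a negative antecedent connect only across the two signs of the joint variable, so no resolvent arises from a positive--positive or a negative--negative pairing, and every admissible pairing is positive--negative. I would also confirm that the tails never reintroduce $x_{i}$, so that deleting the joint variable and forming the union of the remaining literals is unambiguous. A secondary technical point is the status of a pair whose merged tail $\alpha\cdots\beta\cdots$ is tautological or carries repeated literals; here I would argue through the falsifying-assignment picture of Theorem~\ref{thm: =00524D=004EF6=003068=005F8C=004EF6=00306E=0095A2=004FC2}, under which such a clause contributes the empty falsifying set $\overline{\left[c_{\alpha\cdots\beta\cdots}\right]}$ and the direct-product bookkeeping is insensitive to the degeneracy, so the entrywise agreement survives and the equality stands.
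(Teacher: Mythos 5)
Your proposal is correct and takes essentially the same route as the paper: the paper's own proof is a one-line appeal to Definition~\ref{def: =007BC0=00306E=0076F4=007A4D}, treating the equality as the definitional unfolding of the direct product into all pairwise merged clauses and the entailment as ordinary resolution applied to each positive--negative pair. Your extra care about the bilinear reading of $\times$ and about degenerate (tautological or repeated-literal) resolvents only makes explicit what the paper leaves implicit.
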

\begin{proof}
$BA0(B$B=R(B\ref{def: =007BC0=00306E=0076F4=007A4D}$B$N(B$BDj(B$B5A(B$B$h(B$B$j(B$B<+(B$BL@(B$B$G(B$B$"(B$B$k(B$B!#(B\end{proof}
\begin{defn}
\label{def: =0076F4=007A4D=005206=0089E3}$X\in CNF$$B$K(B$B$D(B$B$$(B$B$F(B$B!"(B$X=Y\times Z$$B$+(B$B$D(B$BC1(B$B<M(B$X\ni c\mapsto f\left(c\right)\in Y\cup Z$$B$N(B$BB8(B$B:_(B$B$7(B$B$J(B$B$$(B$Y\times Z\in CNF$$B$r(B$B!"(B$X$$B$N(B$BD>(B$B@Q(B$BJ,(B$B2r(B$B$H(B$B8F(B$B$V(B$B!#(B
\end{defn}
$B$D(B$B$^(B$B$j(B$B!"(B$BD>(B$B@Q(B$BJ,(B$B2r(B$Y\times Z$$B$O(B$B!"(B$BD>(B$B@Q(B$B$r(B$BMQ(B$B$$(B$B$F(B$X$$B$r(B$B9=(B$B@.(B$B$9(B$B$k(B$B$3(B$B$H(B$B$,(B$B$G(B$B$-(B$B!"(B$B$+(B$B$D(B$B$=(B$B$N(B$B5-(B$B=R(B$B$K(B$X$$B$,(B$BKd(B$B$a(B$B9~(B$B$^(B$B$l(B$B$F(B$B$$(B$B$J(B$B$$(B$B9=(B$BB$(B$B$H(B$B$J(B$B$k(B$B!#(B$BD>(B$B@Q(B$BJ,(B$B2r(B$B$r(B$BMQ(B$B$$(B$B$F(B$B2D(B$BLs(B$B!&(B$B4{(B$BLs(B$B@-(B$B$r(B$BDj(B$B5A(B$B$9(B$B$k(B$B!#(B
\begin{defn}
\label{def: =007BC0=00306E=0065E2=007D04=006027}$X\in CNF$$B$N(B$BIt(B$BJ,(B$B<0(B$x\subset X$$B$K(B$B$D(B$B$$(B$B$F(B$BD>(B$B@Q(B$BJ,(B$B2r(B$x=y\times z\in CNF$$B$,(B$BB8(B$B:_(B$B$9(B$B$k(B$B>l(B$B9g(B$B!"(B$X$$B$r(B$BD>(B$B@Q(B$B2D(B$BLs(B$B$H(B$B8F(B$B$V(B$B!#(B$X$$B$,(B$BD>(B$B@Q(B$B2D(B$BLs(B$B$G(B$B$J(B$B$$(B$B>l(B$B9g(B$B!"(B$X$$B$r(B$BD>(B$B@Q(B$B4{(B$BLs(B$B$H(B$B8F(B$B$V(B$B!#(B
\end{defn}

\section{RCNF}

CNF$B$K(B$B1i(B$Beh(B$BBN(B$B7O(B$B$r(B$B0U(B$BL#(B$B$9(B$B$k(B$B0L(B$BAj(B$B$r(B$BF~(B$B$l(B$B$k(B$B!#(B$B4J(B$BC1(B$B$N(B$B$?(B$B$a(B$B!"(B$B$3(B$B$N(B$B0L(B$BAj(B$B$b(B$B$^(B$B$?(B$BO@(B$BM}(B$B<0(B$B$H(B$B$7(B$B$F(B$B07(B$B$&(B$B!#(B
\begin{defn}
\label{def: RCNF}$F\in CNF$$B$K(B$B$D(B$B$$(B$B$F(B$B!"(B$B$=(B$B$N(B$BJQ(B$B?t(B$B$N(B$B??(B$B56(B$B$,(B$B1i(B$Beh(B$BBN(B$B7O(B$B$K(B$B$h(B$B$k(B$B@a(B$B$N(B$B@)(B$BLs(B$B$N(B$BB8(B$B:_(B$B$N(B$BM-(B$BL5(B$B$K(B$BBP(B$B1~(B$B$9(B$B$k(B$BO@(B$BM}(B$B<0(B$B$r(BDCNF(Deduction~CNF)$B$H(B$B8F(B$B$V(B$B!#(B$BFC(B$B$K(B$B1i(B$Beh(B$BBN(B$B7O(B$B$,(B$BF3(B$B=P(B$B86(B$BM}(B$B$H(B$B$J(B$B$k(BDCNF$B$r(BRCNF(Resolution~CNF)$RCNF\left(F\right)$$B$H(B$B8F(B$B$V(B$B!#(BRCNF$B$N(B$B@a(B$B$r(B$BF3(B$B=P(B$B@a(B(Resolution~Clause)$B$H(B$B8F(B$B$V(B$B!#(B$BF3(B$B=P(B$B@a(B$B$G(B$B$O(BCNF$B$N(B$B@a(B$B$,(B$BJQ(B$B?t(B$B$H(B$B$J(B$B$j(B$B!"(B$BF3(B$B=P(B$B$N(B$BA0(B$B7o(B$B$,(B$BH](B$BDj(B$BJQ(B$B?t(B$B!"(B$B8e(B$B7o(B$B$,(B$B9N(B$BDj(B$BJQ(B$B?t(B$B$H(B$B$J(B$B$k(B$B!#(B$B$J(B$B$*(B$RCNF\left(F\right)$$B$K(B$B$O(B$B6u(B$B@a(B$B$K(B$BBP(B$B1~(B$B$9(B$B$k(B$BJQ(B$B?t(B$B$r(B$B4^(B$B$^(B$B$J(B$B$$(B$B$b(B$B$N(B$B$H(B$B$9(B$B$k(B$B!#(B
\end{defn}
$BNc(B$B$((B$B$P(B

$F\supset c_{ip\cdots}\wedge c_{\overline{i}q\cdots}$

$B$N(B$B$H(B$B$-(B$B!"(B

$RCNF\left(F\right)\supset\left(c_{ip\cdots}\right)\wedge\left(c_{\overline{i}q\cdots}\right)\wedge\left(\overline{c_{ip\cdots}}\vee\overline{c_{\overline{i}q\cdots}}\vee c_{p\cdots q\cdots}\right)$

$B$H(B$B$J(B$B$k(B$B!#(BRCNF$B$O(B$B6u(B$B@a(B$B$K(B$BBP(B$B1~(B$B$9(B$B$k(B$BJQ(B$B?t(B$B$r(B$B4^(B$B$^(B$B$J(B$B$$(B$B$?(B$B$a(B$B!"(B$F$$B$H(B$RCNF\left(F\right)$$B$N(B$B=<(B$BB-(B$B2D(B$BH](B$B$O(B$B0l(B$BCW(B$B$9(B$B$k(B$B!#(B$BF3(B$B=P(B$B$K(B$B$h(B$B$k(B$B8e(B$B7o(B$B$O(B$BI,(B$B$:(B$BC1(B$B0l(B$B$N(B$B@a(B$B$H(B$B$J(B$B$k(B$B$?(B$B$a(B$B!"(B$RCNF\left(F\right)\in HornCNF$$B$H(B$B$J(B$B$k(B$B!#(B

$B0J(B$B2<(B$B$K(B$B<((B$B$9(B$BDL(B$B$j(B$RCNF$$B$O(BP$B40(B$BA4(B$B$H(B$B$J(B$B$k(B$B!#(B
\begin{thm}
\label{thm: RCNF=00306EP=005B8C=005168=006027}RCNF$B$O(BP$B40(B$BA4(B$B$H(B$B$J(B$B$k(B$B!#(B\end{thm}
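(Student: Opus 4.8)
The plan is to establish the two obligations of P-completeness separately: membership in P, and P-hardness under logarithmic-space many-one reductions. Throughout I read the ``problem $RCNF$'' as the decision problem of whether $RCNF(F)$ is satisfiable, equivalently which clause-variables are forced true by the encoded resolutions.

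For membership I would rely on the observation recorded immediately before the statement, namely $RCNF(F)\in HornCNF$. Every resolution clause carries exactly one positive literal, the consequent clause-variable, together with negative literals for its two antecedents, and each original clause of $F$ is asserted as a positive unit; hence the whole formula is Horn. The attached decision problem is then settled by the standard unit-propagation / forward-chaining procedure for Horn satisfiability, which runs in time linear in the formula size. The one point needing care is that the map $F\mapsto RCNF(F)$ must itself be polynomial: I would argue that the resolution clauses range only over the clause-variables actually present, the clauses of $F$ together with the consequents they produce, so that their number is bounded by a polynomial in $|F|$ rather than by the exponential number of all clauses over the underlying propositional variables. Pinning down this size bound precisely is the first thing I would verify, after which membership is routine.

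For P-hardness I would give a logspace reduction from a canonical P-complete problem. The convenient source is satisfiability for Horn formulas built only from positive units and implications with two negative literals and one positive literal, a restriction of HornSAT that is still P-complete; equivalently one reduces from the Monotone Circuit Value Problem, sending each true input gate to a positive unit, each fan-in-two conjunction $g=a\wedge b$ to $\overline{a}\vee\overline{b}\vee g$ directly, and each disjunction to two such implications by padding with a globally asserted true-variable. The real work is to realize each implication as a genuine resolution clause: I must choose propositional variables and clauses of a CNF $F$ so that the two antecedent clause-variables resolve to exactly the intended consequent clause-variable. By introducing a fresh joint variable for each gate and padding the two antecedent clauses with disjoint literal sets, I can force their unique resolvent to be the desired consequent while chaining a gate's output clause-variable into the next gate as required; forward chaining in $RCNF(F)$ then reproduces the circuit evaluation step by step, and the output clause-variable is forced true iff the circuit outputs true.

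I expect the main obstacle to lie in this hardness direction, specifically in controlling the resolution structure. Because resolution is defined only between clauses sharing a complementary joint variable, I must guarantee that the introduced clauses produce the intended resolvents and no others, so that the forced-true clause-variables of $RCNF(F)$ are exactly the gates evaluating to true. Designing the padding so that every unintended pair of clauses either shares no complementary variable, hence does not resolve, or yields an irrelevant resolvent, while keeping the construction logspace-computable and of polynomial size, is the delicate part; the membership direction is comparatively straightforward once the polynomial-size bound on $F\mapsto RCNF(F)$ is secured.
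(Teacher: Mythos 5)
Your strategy matches the paper's in outline: membership in P because every RCNF formula is Horn, and P-hardness by a logarithmic-space reduction from a P-complete Horn-type problem whose forward chaining is simulated by the encoded resolutions. The paper's reduction source is HornCNF itself, handled in two logspace stages: each Horn clause is first split into clauses with at most three variables (as in the usual CNF-to-3CNF reduction), and each resulting clause $c'$ is then replaced by an explicitly listed set of resolution clauses $RCNF(c')$; since Horn formulas are unit-resolution complete, only the antecedent-shrinking resolutions need to be included, which both keeps the construction small and disposes of the ``unintended resolvent'' problem that you identify as the delicate part but do not solve. Two corrections to your write-up. First, membership does not require the map $F\mapsto RCNF(F)$ to be polynomial: the decision problem takes the RCNF formula itself as input, and Horn satisfiability settles it; moreover the polynomial size bound you propose to verify is false in general --- Theorem \ref{thm: RCNF=00306E=00898F=006A21} asserts precisely that $\left|RCNF(F)\right|$ can exceed every polynomial in $\left|F\right|$, and that failure is the engine of the paper's separation argument --- so you should drop that claim rather than attempt to prove it. Second, your hardness direction is a plan rather than a proof: the gadget design you defer is where all the content lies, and the paper discharges it concretely by restricting the source problem to Horn formulas and writing out the required resolution clauses for one-, two- and three-variable Horn clauses explicitly.
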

\begin{proof}
$BL@(B$B$i(B$B$+(B$B$K(BRCNF$B$O(BHornCNF$B$G(B$B$"(B$B$j(B$RCNF\in P$$B$N(B$B$?(B$B$a(B$B!"(BHornCNF$B$,(B$BBP(B$B?t(B$BNN(B$B0h(B$B4T(B$B85(B$B$K(B$B$F(BRCNF$B$K(B$B4T(B$B85(B$B$G(B$B$-(B$B$k(B$B$3(B$B$H(B$B$r(B$B<((B$B$;(B$B$P(B$B$$(B$B$$(B$B!#(B

$F\in HornCNF$$B$H(B$B$7(B$B!"(B$RCNF\left(F\right)$$B$r(B$B5a(B$B$a(B$B$k(B$B!#(B$F$$B$+(B$B$i(B$RCNF\left(F\right)$$B$X(B$B$N(B$BBP(B$B?t(B$BNN(B$B0h(B$B4T(B$B85(B$B$O(B2$BCJ(B$B3,(B$B$G(B$B9T(B$B$&(B$B!#(B$B$^(B$B$:(B$B;O(B$B$a(B$B$K(B$BA4(B$B$F(B$B$N(B$c\in F$$B$r(B$B9b(B$B!9(B3$B$D(B$B$N(B$BJQ(B$B?t(B$B$7(B$B$+(B$B4^(B$B$^(B$B$J(B$B$$(B$B@a(B$B$+(B$B$i(B$B$J(B$B$k(B$f\in HornCNF$$B$K(B$B4T(B$B85(B$B$7(B$B!"(B$B<!(B$B$K(B$c'\in f$$B$r(BRCNF$B$K(B$B4T(B$B85(B$B$9(B$B$k(B$B!#(B

$B$^(B$B$:(B$B;O(B$B$a(B$B$K(B$c\in F$$B$r(B$f$$B$K(B$B4T(B$B85(B$B$9(B$B$k(B$B!#(B$B$3(B$B$N(B$B4T(B$B85(B$B$O(BCNF$B$+(B$B$i(B3CNF$B$X(B$B$N(B$B4T(B$B85(B$B$H(B$BF1(B$BMM(B$B$K(B$B$7(B$B$F(B$B9T(B$B$&(B$B$3(B$B$H(B$B$,(B$B$G(B$B$-(B$B$k(B$B!#(B$B$D(B$B$^(B$B$j(B$BA4(B$B$F(B$B$N(B$c\in F$$B$K(B$B$D(B$B$$(B$B$F(B$B?7(B$B$?(B$B$J(B$BJQ(B$B?t(B$B$r(B$BDI(B$B2C(B$B$7(B$B$F(B$BJ,(B$B3d(B$B$9(B$B$l(B$B$P(B$BNI(B$B$$(B$B!#(B

$c_{I\overline{j}\overline{k}\overline{l}\cdots}\rightarrow c_{I\overline{j}\overline{0}}\wedge c_{0\overline{k}\overline{1}}\wedge c_{1\overline{l}\overline{2}}\wedge\cdots=f$

$B$3(B$B$N(B$B4T(B$B85(B$B$O(B$B!"(B$BBP(B$B>](B$B$H(B$B$J(B$B$k(B$B@a(B$B$X(B$B$N(B$B

$B<!(B$B$K(B$c'\in f$$B$r(B$RCNF\left(c'\right)$$B$K(B$B4T(B$B85(B$B$9(B$B$k(B$B!#(B$B$3(B$B$N(B$B4T(B$B85(B$B$O(B$B!"(B$B3F(B$B@a(B$B$K(B$BBP(B$B1~(B$B$9(B$B$k(B$BF3(B$B=P(B$B$r(B$BDI(B$B2C(B$B$9(B$B$k(B$B$3(B$B$H(B$B$G(B$B9T(B$B$&(B$B$3(B$B$H(B$B$,(B$B$G(B$B$-(B$B$k(B$B!#(B$B$^(B$B$?(BHornCNF$B$O(B$BC1(B$B0L(B$BF3(B$B=P(B$B2D(B$BG=(B$B$J(B$B$?(B$B$a(B$B!"(B$BI,(B$BMW(B$B$J(B$BF3(B$B=P(B$B$O(B$BA0(B$B7o(B$B$N(B$BJQ(B$B?t(B$B$N(B$B8:(B$B>/(B$B$9(B$B$k(B$B$b(B$B$N(B$B$N(B$B$_(B$B$H(B$B$J(B$B$k(B$B!#(B$B$D(B$B$^(B$B$j(B$BA4(B$B$F(B$B$N(B$c'\in f$$B$K(B$B$D(B$B$$(B$B$F(B$B!"(B$B2<(B$B5-(B$B$N(B$BF3(B$B=P(B$B$r(B$B9=(B$B@.(B$B$7(B$B$F(B$RCNF\left(c'\right)$$B$H(B$B$9(B$B$l(B$B$P(B$BNI(B$B$$(B$B!#(B

$c_{R}\rightarrow\left(c_{R}\right)\wedge\left(\overline{c_{R}}\vee\overline{c_{\overline{R}}}\right)$

$c_{P\overline{q}}\rightarrow\left(c_{P\overline{q}}\right)\wedge\left(c_{P}\vee\overline{c_{P\overline{q}}}\vee\overline{c_{q}}\right)\wedge\left(\overline{c_{P}}\vee\overline{c_{\overline{P}}}\right)$

$c_{I\overline{j}\overline{k}}\rightarrow\left(c_{I\overline{j}\overline{k}}\right)\wedge\left(c_{I\overline{k}}\vee\overline{c_{I\overline{j}\overline{k}}}\vee\overline{c_{j}}\right)\wedge\left(c_{I\overline{j}}\vee\overline{c_{I\overline{j}\overline{k}}}\vee\overline{c_{k}}\right)\wedge\left(c_{I}\vee\overline{c_{I\overline{j}}}\vee\overline{c_{j}}\right)\wedge\left(c_{I}\vee\overline{c_{I\overline{k}}}\vee\overline{c_{k}}\right)\wedge\left(\overline{c_{I}}\vee\overline{c_{\overline{I}}}\right)$

$B$3(B$B$N(B$B4T(B$B85(B$B$b(B$B$^(B$B$?(B$B!"(B$BBP(B$B>](B$B$H(B$B$J(B$B$k(B$B@a(B$B$X(B$B$N(B$B

$B0J(B$B>e(B$B$N(B2$B$D(B$B$N(B$B4T(B$B85(B$B$K(B$B$h(B$B$j(B$B!"(BHornCNF$B$r(BRCNF$B$K(B$BJQ(B$B49(B$B$9(B$B$k(B$B$3(B$B$H(B$B$,(B$B$G(B$B$-(B$B$k(B$B!#(B$B$3(B$B$N(B$B4T(B$B85(B$B$O(B$BN>(B$BJ}(B$B$H(B$B$b(B$BBP(B$B?t(B$BNN(B$B0h(B$B4T(B$B85(B$B$H(B$B$J(B$B$k(B$B$?(B$B$a(B$B!"(B$BA4(B$BBN(B$B$H(B$B$7(B$B$F(B$B$b(B$BBP(B$B?t(B$BNN(B$B0h(B$B4T(B$B85(B$B$H(B$B$J(B$B$k(B$B!#(B

$B$h(B$B$C(B$B$F(B$B!"(BRCNF$B$O(BP$B$K(B$BB0(B$B$7(B$B!"(B$B$+(B$B$D(BHornCNF$B$K(B$BBP(B$B?t(B$BNN(B$B0h(B$B4T(B$B85(B$B2D(B$BG=(B$B$J(B$B$?(B$B$a(B$B!"(BP$B40(B$BA4(B$B$G(B$B$"(B$B$k(B$B!#(B
\end{proof}
$B$^(B$B$?(B$B!"(BRCNF$B$O(B$BF3(B$B=P(B$B$N(B$B@)(B$BLs(B$B$r(B$B$=(B$B$N(B$B9=(B$BB$(B$B$H(B$B$7(B$B$F(B$B;}(B$B$D(B$B!#(B
\begin{thm}
\label{thm: RCNF=00306E=007C92=005EA6}$F\in MUC$$B$K(B$B$*(B$B$$(B$B$F(B$\widehat{\left[c\right]}$$B$,(B$BD>(B$B@Q(B$B4{(B$BLs(B$B$H(B$B$J(B$B$k(B$B>l(B$B9g(B$B!"(B$RCNF\left(F\right)$$B$O(B$\widehat{\left[c\right]}$$B$N(B$BO"(B$B7k(B$B@.(B$BJ,(B$B$K(B$BBP(B$B1~(B$B$9(B$B$k(B$BA0(B$B7o(B$B$r(B$B;}(B$B$D(B$B!#(B\end{thm}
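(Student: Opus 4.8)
The plan is to extract the connected components of $\widehat{\left[c\right]}$ geometrically, and then to match each component with one antecedent (Horn-clause body) of $RCNF\left(F\right)$ through the resolution-to-connectivity dictionary already proved.

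First I would exploit $F\in MUC$. Minimal unsatisfiability forces $F\setminus\left\{c\right\}$ to be satisfiable for every clause, so each $\widehat{\left[c\right]}$ is nonempty and splits into connected components $\widehat{\left[c\right]}=\bigsqcup_{k}K_{k}$ in the metric space of \ref{def: =008AD6=007406=005F0F=00306E=008DDD=0096E2=007A7A=009593}. Inside the subcube $\overline{\left[c\right]}$ an assignment leaves only $c$ false until a single bit-flip lands in some $\overline{\left[c'\right]}$; hence the walls that cut $\overline{\left[c\right]}$ into the regions $K_{k}$ are exactly the intersections $\overline{\left[c\right]}\cap\overline{\left[c'\right]}$ with clauses $c'$ sharing a variable of $c$ in opposite polarity. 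I would record that every boundary between two components is of this form.

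Next I would read \ref{thm: =00524D=004EF6=003068=005F8C=004EF6=00306E=0095A2=004FC2} backwards: resolving $c=c_{ip\cdots}$ against such a $c'=c_{\overline{i}q\cdots}$ yields a resolvent whose falsifying set is precisely the connected bridge of $\overline{\left[c_{ip\cdots}\right]}$ and $\overline{\left[c_{\overline{i}q\cdots}\right]}$ across the $x_{i}$ direction. Thus the family of resolutions involving $c$ is exactly what delimits the components, and by the construction of $RCNF\left(F\right)$ each such resolution contributes a Horn clause whose antecedent carries $c$ as a negated literal. I would then define the map sending each $K_{k}$ to the antecedent that collects the resolutions bounding it, and verify it is well defined and hits every $c$-antecedent of $RCNF\left(F\right)$.

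The role of the hypothesis, and the step I expect to be the main obstacle, is injectivity: no two components may share one antecedent. I would argue this by contraposition against \ref{def: =007BC0=00306E=0065E2=007D04=006027}. If two components were fused under a single antecedent, then by \ref{thm: =005C0E=0051FA=00306E=0069CB=006210} the corresponding consequent would be a genuine direct product $y\times z$, the two factors indexing the two components; this is a nontrivial direct product decomposition in the sense of \ref{def: =0076F4=007A4D=005206=0089E3} and exhibits $\widehat{\left[c\right]}$ as product-reducible, against hypothesis. Minimality of $MUC$ conversely rules out redundant resolutions, so no single component is spread across two antecedents. Combining the two directions makes the correspondence a bijection, so $RCNF\left(F\right)$ carries exactly one antecedent per connected component of $\widehat{\left[c\right]}$. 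The delicate point throughout is identifying the combinatorial wall/bridge picture with the algebraic product of \ref{def: =007BC0=00306E=0076F4=007A4D}; I would invest most of the work there, the connectivity bookkeeping itself being routine.
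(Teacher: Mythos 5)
You have correctly isolated two of the paper's three ingredients: the resolution-to-linkage dictionary of \ref{thm: =00524D=004EF6=003068=005F8C=004EF6=00306E=0095A2=004FC2}, and the use of product-irreducibility to keep the pieces of $\widehat{\left[c\right]}$ from being merged into a product-shaped antecedent (your contraposition via \ref{thm: =005C0E=0051FA=00306E=0069CB=006210} and \ref{def: =0076F4=007A4D=005206=0089E3} is essentially the paper's second step). But the correspondence you build points at the wrong objects. You attach to each component $K_{k}$ the antecedents of the \emph{one-step} resolutions of $c$ against the neighbouring clauses $c'$ that wall off $K_{k}$; those antecedents contain $c$ itself as a negated literal, and there are only polynomially many of them. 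What the paper proves, and what it needs later in \ref{thm: RCNF=00306E=00898F=006A21}, is different: for each connected component of $\widehat{\left[c\right]}$ there is a \emph{derived} clause, obtained by \emph{iterating} resolution until its falsifying set is the linkage covering that component, and it is this derived clause that occurs as an antecedent \emph{variable} of $RCNF\left(F\right)$. Your single-step picture never constructs these derived clauses, so it cannot produce one antecedent per component in the regime where the number of components is superpolynomial, which is precisely the regime the theorem exists to serve.

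The second gap is the role of $F\in MUC$. You spend it on nonemptiness of $\widehat{\left[c\right]}$ and on a non-redundancy claim (``no component is spread across two antecedents''). The paper's proof uses minimal unsatisfiability for the decisive final step: since the empty clause must be derivable, the materialized consequents corresponding to the components cannot be dead ends --- they must be consumed as premises of further resolutions, and only that consumption places them on the antecedent (negated) side of the Horn clauses of $RCNF\left(F\right)$. Without this step you obtain, at best, consequents corresponding to the components, not antecedents, and the statement is about antecedents.
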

\begin{proof}
$BA0(B$B=R(B\ref{thm: =00524D=004EF6=003068=005F8C=004EF6=00306E=0095A2=004FC2}$B$N(B$BDL(B$B$j(B$B!"(B$RCNF\left(F\right)$$B$K(B$B$O(B$B$=(B$B$l(B$B$>(B$B$l(B$B$N(B$B@a(B$B$N(B$BO"(B$B7k(B$BIt(B$B$K(B$BBP(B$B1~(B$B$9(B$B$k(B$B8e(B$B7o(B$B$,(B$BB8(B$B:_(B$B$9(B$B$k(B$B!#(B$B$^(B$B$?(B$BF3(B$B=P(B$B$K(B$B$h(B$B$j(B$B82(B$B:_(B$B2=(B$B$7(B$B$?(B$B8e(B$B7o(B$B$N(B$BO"(B$B7k(B$BIt(B$B$K(B$B$b(B$B$^(B$B$?(B$BBP(B$B1~(B$B$9(B$B$k(B$B8e(B$B7o(B$B$,(B$BB8(B$B:_(B$B$9(B$B$k(B$B!#(B$B$h(B$B$C(B$B$F(B$BF3(B$B=P(B$B$r(B$B7+(B$B$j(B$BJV(B$B$9(B$B$3(B$B$H(B$B$K(B$B$h(B$B$j(B$\widehat{\left[c\right]}$$B$N(B$BO"(B$B7k(B$BIt(B$B$K(B$BBP(B$B1~(B$B$9(B$B$k(B$B8e(B$B7o(B$B$r(B$B82(B$B:_(B$B2=(B$B$9(B$B$k(B$B$3(B$B$H(B$B$,(B$B$G(B$B$-(B$B$k(B$B!#(B

$B2>(B$BDj(B$B$h(B$B$j(B$\widehat{\left[c\right]}$$B$O(B$BD>(B$B@Q(B$B4{(B$BLs(B$B$H(B$B$J(B$B$k(B$B$?(B$B$a(B$B!"(B$\widehat{\left[c\right]}$$B$r(B$BAH(B$B$_(B$B9g(B$B$o(B$B$;(B$B$F(B$BF3(B$B=P(B$B$N(B$BA0(B$B7o(B$B$K(B$B$^(B$B$H(B$B$a(B$B$k(B$B$3(B$B$H(B$B$,(B$B$G(B$B$-(B$B$J(B$B$$(B$B!#(B$B$h(B$B$C(B$B$F(B$\widehat{\left[c\right]}$$B$N(B$BO"(B$B7k(B$BIt(B$B$K(B$BBP(B$B1~(B$B$9(B$B$k(B$B8e(B$B7o(B$B$O(B$B$=(B$B$l(B$B$>(B$B$l(B$B0[(B$B$J(B$B$k(B$B@a(B$B$H(B$B$7(B$B$F(B$B82(B$B:_(B$B2=(B$B$9(B$B$k(B$B!#(B

$B2>(B$BDj(B$B$h(B$B$j(B$F\in MUC$$B$H(B$B$J(B$B$k(B$B$?(B$B$a(B$B!"(B$B>e(B$B5-(B$B$N(B$B8e(B$B7o(B$B$r(B$BA0(B$B7o(B$B$H(B$B$7(B$B$F(B$BAH(B$B$_(B$B9g(B$B$;(B$B$k(B$B$3(B$B$H(B$B$K(B$B$h(B$B$j(B$B6u(B$B@a(B$B$r(B$BF3(B$B=P(B$B$G(B$B$-(B$B$J(B$B$/(B$B$F(B$B$O(B$B$J(B$B$i(B$B$J(B$B$$(B$B!#(B$B$h(B$B$C(B$B$F(B$\widehat{\left[c\right]}$$B$N(B$BO"(B$B7k(B$BIt(B$B$K(B$BBP(B$B1~(B$B$9(B$B$k(B$B8e(B$B7o(B$B$O(B$BA0(B$B7o(B$B$H(B$B$7(B$B$F(B$RCNF\left(F\right)$$B$K(B$BB8(B$B:_(B$B$7(B$B$J(B$B$/(B$B$F(B$B$O(B$B$J(B$B$i(B$B$J(B$B$$(B$B!#(B$B$h(B$B$C(B$B$F(B$BDj(B$BM}(B$B$,(B$B@.(B$B$j(B$BN)(B$B$D(B$B$3(B$B$H(B$B$,(B$B$o(B$B$+(B$B$k(B$B!#(B
\end{proof}

\section{3CNF}

3CNF$B$N(B$BJ#(B$B;((B$B$5(B$B$r(B$B<((B$B$9(B$B!#(B3CNF$B$N(B$BJ#(B$B;((B$B$5(B$B$r(B$BJ](B$B$A(B$B$+(B$B$D(B$BD>(B$B@Q(B$B4{(B$BLs(B$B$H(B$B$J(B$B$k(BTCNF$B$r(B$BDj(B$B5A(B$B$9(B$B$k(B$B!#(B$B$=(B$B$7(B$B$F(B$B!"(BTCNF$B$+(B$B$i(B$B9=(B$B@.(B$B$5(B$B$l(B$B$k(BCCNF$B$,(B$BB?(B$B9`(B$B<0(B$B5,(B$BLO(B$B$N(BRCNF$B$K(B$B4T(B$B85(B$B$G(B$B$-(B$B$J(B$B$$(B$B$3(B$B$H(B$B$r(B$B<((B$B$9(B$B!#(B
\begin{thm}
\label{thm: 3CNF,HornCNF,2CNF=00306E=009055=003044}3CNF$B$O(B2$BJQ(B$B?t(B$B$N(B$B4X(B$B78(B$B$r(B$BB>(B$B$N(B$BJQ(B$B?t(B$B$K(B$BBP(B$B1~(B$BIU(B$B$1(B$B$k(B$B!#(BHornCNF$B$O(B$BJ#(B$B?t(B$B$N(B$BJQ(B$B?t(B$B$,(B$B??(B$B$H(B$B$J(B$B$k(B$B4X(B$B78(B$B$r(B$BB>(B$B$N(B$BJQ(B$B?t(B$B$N(B$B??(B$B$K(B$BBP(B$B1~(B$BIU(B$B$1(B$B$k(B$B!#(B2CNF$B$O(B$B$"(B$B$k(B$BJQ(B$B?t(B$B$H(B$BB>(B$B$N(B$BJQ(B$B?t(B$B$r(B$BBP(B$B1~(B$BIU(B$B$1(B$B$k(B$B!#(B\end{thm}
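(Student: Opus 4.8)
The plan is to establish the three assertions uniformly by rewriting each clause as an implication, using the equivalence $a \vee b \equiv \overline{a} \rightarrow b$ together with the antecedent--consequent reading of resolution already fixed in the preliminaries, under which a resolvent is produced in the format $c_{ip\cdots} \wedge c_{\overline{i}q\cdots} \rightarrow c_{p\cdots q\cdots}$. In this reading every clause becomes an implication whose consequent is one distinguished literal and whose antecedent is the conjunction of the negations of the remaining literals; the three statements then reduce to recording, for each clause class, how many literals lie on each side of the arrow and under what sign restriction. Because the statement is a structural classification rather than a single equivalence, I would organize the proof as three short cases, one per clause class.

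First, for 3CNF I would take a clause with exactly three literals $(l_1 \vee l_2 \vee l_3)$ and rewrite it as $(\overline{l_1} \wedge \overline{l_2}) \rightarrow l_3$. Here the antecedent $\overline{l_1} \wedge \overline{l_2}$ is a genuine relation between two variables and the consequent $l_3$ is the single variable to which that relation is associated; since the signs of $l_1, l_2, l_3$ are unrestricted and any of the three literals may play the role of consequent, this is exactly the claim that 3CNF associates a two-variable relation to another variable. I would also observe that the two-literal antecedent cannot in general be contracted to a single literal, which is what separates this case from the 2CNF case below.

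Second, for HornCNF I would use that a Horn clause carries at most one positive literal, so a definite Horn clause rewrites as $(\overline{x_{i_1}} \vee \cdots \vee \overline{x_{i_m}} \vee x_k) \equiv (x_{i_1} \wedge \cdots \wedge x_{i_m}) \rightarrow x_k$; this exhibits the consequent as the unique positive literal $x_k$ and the antecedent as the simultaneous truth of the possibly many variables $x_{i_1}, \ldots, x_{i_m}$, which is the second assertion. Third, for 2CNF I would rewrite a two-literal clause $(l_1 \vee l_2)$ as $\overline{l_1} \rightarrow l_2$ (equivalently $\overline{l_2} \rightarrow l_1$), so that a single variable is associated with a single other variable, giving the third assertion.

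The main obstacle is not computational but definitional: the statement leans on the informal verb ``associates,'' and the real content is to pin this down as the antecedent--consequent directionality of the implication and to check that the direction chosen is compatible with the resolution restrictions imposed earlier. Concretely, I expect the delicate point to be the HornCNF direction, because there the arrow is forced --- it must point toward the unique positive literal, so unlike the symmetric 2CNF case the consequent is canonical --- and I would have to argue that this forced direction is precisely the ``truth of another variable'' named in the statement, while at the same time confirming that the 3CNF antecedent is irreducibly two-variable, so that the three classes are genuinely distinguished rather than collapsing into one another.
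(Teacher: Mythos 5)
Your proposal is correct and follows essentially the same route as the paper: the paper's proof consists exactly of the three implication rewritings $\left(x_{P}\vee x_{Q}\vee x_{R}\right)\rightleftarrows\left(\overline{x_{P}}\wedge\overline{x_{Q}}\rightarrow x_{R}\right)$, $\left(\overline{x_{p}}\vee\overline{x_{q}}\vee\cdots\vee x_{r}\right)\rightleftarrows\left(x_{p}\wedge x_{q}\rightarrow x_{r}\right)$, and $\left(x_{P}\vee x_{Q}\right)\rightleftarrows\left(\overline{x_{P}}\rightarrow x_{Q}\right)$, which are the same three case analyses you give. Your additional remarks on the forced direction of the Horn implication and the irreducibility of the two-literal antecedent go slightly beyond what the paper states, but the core argument is identical.
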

\begin{proof}
3CNF$B$K(B$B$D(B$B$$(B$B$F(B$B$O(B$B<!(B$B$N(B$B4X(B$B78(B$B$h(B$B$j(B$BL@(B$B$i(B$B$+(B$B$G(B$B$"(B$B$k(B$B!#(B

$\left(x_{P}\vee x_{Q}\vee x_{R}\right)\rightleftarrows\left(\overline{x_{P}}\wedge\overline{x_{Q}}\rightarrow x_{R}\right)$

HornCNF$B$K(B$B$D(B$B$$(B$B$F(B$B$O(B$B<!(B$B$N(B$B4X(B$B78(B$B$h(B$B$j(B$BL@(B$B$i(B$B$+(B$B$G(B$B$"(B$B$k(B$B!#(B

$\left(\overline{x_{p}}\vee\overline{x_{q}}\vee\cdots\vee x_{r}\right)\rightleftarrows\left(x_{p}\wedge x_{q}\rightarrow x_{r}\right)$

2CNF$B$K(B$B$D(B$B$$(B$B$F(B$B$O(B$B<!(B$B$N(B$B4X(B$B78(B$B$h(B$B$j(B$BL@(B$B$i(B$B$+(B$B$G(B$B$"(B$B$k(B$B!#(B

$\left(x_{P}\vee x_{Q}\right)\rightleftarrows\left(\overline{x_{P}}\rightarrow x_{Q}\right)$
\end{proof}
$B<!(B$B$K(B$B!"(B3CNF$B$N(B$BFC(B$BD'(B(2$BJQ(B$B?t(B$B$N(B$B4X(B$B78(B$B$H(B1$BJQ(B$B?t(B$B$N(B$BBP(B$B1~(B$BIU(B$B$1(B)$B$r(B$B0](B$B;}(B$B$7(B$B$?(B$B$^(B$B$^(B$B<h(B$B$j(B$B07(B$B$$(B$B$r(B$BMF(B$B0W(B$B2=(B$B$7(B$B$?(BCNF$B$r(B$B<((B$B$9(B$B!#(B
\begin{defn}
\label{def: TCNF}

$\left[T_{PQR}\right]=\overline{T_{PQR}}=c_{PQ\overline{R}}\wedge c_{P\overline{Q}R}\wedge c_{\overline{P}QR}$

$B$H(B$B$J(B$B$k(B$B!"(B$B$9(B$B$J(B$B$o(B$B$A(B 

$T_{PQR}=c_{\overline{P}\overline{Q}}\wedge c_{\overline{Q}\overline{R}}\wedge c_{\overline{P}\overline{R}}\wedge c_{PQR}$

$B$H(B$B$J(B$B$k(BCNF$B$r(B$BAH(B$B9g(B$B$;(B$B$?(BCNF$B$r(BTCNF(Triangular~CNF)$B$H(B$B8F(B$B$V(B$B!#(B$B$^(B$B$?(BTCNF$B$G(B$B9=(B$B@.(B$B$5(B$B$l(B$B$?(BMUC$B$r(BTMUC(Triangular~Minimal~Unsatisfiable~Core)$B$H(B$B8F(B$B$V(B$B!#(B
\end{defn}
$B0J(B$B2<(B$B$K(BTCNF$B$N(B$B@-(B$B<A(B$B$r(B$B<((B$B$9(B$B!#(B
\begin{thm}
\label{thm: TCNF=00306ENP=005B8C=005168=006027}TCNF$B$O(BNP$B40(B$BA4(B$B$H(B$B$J(B$B$k(B$B!#(B\end{thm}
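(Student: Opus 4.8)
The plan is to establish the two halves of NP-completeness separately, after first pinning down the semantics of a single block. The key observation is that $T_{PQR}=c_{\overline{P}\overline{Q}}\wedge c_{\overline{Q}\overline{R}}\wedge c_{\overline{P}\overline{R}}\wedge c_{PQR}$ is satisfied by an assignment precisely when exactly one of the three literals $P,Q,R$ is true: the three binary clauses forbid any two of them from being true simultaneously, while the ternary clause $c_{PQR}$ forces at least one of them to be true. Hence a TCNF instance --- a conjunction $\bigwedge_i T_{P_iQ_iR_i}$ of such blocks over shared variables --- is satisfiable if and only if the corresponding ``exactly one in three'' (1-in-3) constraint system is satisfiable. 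I would record this equivalence as a first lemma and verify it by the truth-table argument just sketched.

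For membership in NP, I would note that each block $T_{PQR}$ expands, by Definition \ref{def: TCNF}, into four ordinary clauses, so any TCNF instance is a genuine element of $CNF$ of size linear in the number of blocks. A satisfying assignment is therefore a polynomial-size certificate checkable in polynomial time, and TCNF-satisfiability lies in NP as a restriction of $CNF$-satisfiability.

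For NP-hardness I would reduce a known NP-complete problem to TCNF-satisfiability. The most economical route uses the equivalence above: positive (monotone) 1-in-3 SAT is NP-complete, and an instance of it with clauses $\{(a_i,b_i,c_i)\}$ maps directly to the TCNF $\bigwedge_i T_{a_ib_ic_i}$, a transformation computable in logarithmic space that preserves satisfiability exactly. If instead a self-contained reduction from 3CNF-satisfiability is wanted --- matching the paper's earlier analysis of 3CNF --- I would replace each disjunctive clause $(P\vee Q\vee R)$ by a fixed gadget of $O(1)$ TCNF blocks on the original literals together with $O(1)$ fresh auxiliary variables, chosen so that the gadget is satisfiable, after projecting out the auxiliaries, exactly when $P\vee Q\vee R$ holds. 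Concatenating these gadgets over all clauses yields a TCNF of size linear in the input, so the reduction is polynomial.

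The main obstacle is the hardness direction, and specifically the gadget in the self-contained version: an ``exactly one'' block is strictly more restrictive than a disjunction, so one cannot simulate $(P\vee Q\vee R)$ by a single $T$-block. The work is to design a constant-size family of 1-in-3 constraints whose projection onto $\{P,Q,R\}$ is exactly the seven satisfying patterns of the disjunction (all but the all-false pattern), and to prove this projection identity; here the freedom to let $P,Q,R$ range over arbitrary literals (negations allowed, per Definition \ref{def: TCNF}) is essential, since it lets auxiliary variables be forced and combined. If one is content to cite the NP-completeness of 1-in-3 SAT, this obstacle disappears and only the block-semantics lemma and the routine NP-membership argument remain to be verified.
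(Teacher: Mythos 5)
Your reading of the block semantics is exactly right ($T_{PQR}$ is the 1-in-3 constraint on the literals $P,Q,R$), and your membership argument coincides with the paper's (TCNF is a sub-class of 3CNF, hence in NP). Where you diverge is in the hardness direction. Your primary route --- a direct, satisfiability-preserving translation from monotone 1-in-3 SAT --- is valid and complete, but it buys its brevity by citing an external NP-completeness result (Schaefer); the paper instead gives the self-contained reduction from 3CNF that you sketch only as an alternative, and it supplies precisely the gadget you flag as the main obstacle. For a clause $C_{\overline{X}Y\overline{Z}}$ the paper uses the chain $T_{STX}\wedge T_{TUY}\wedge T_{UVZ}$ with four fresh variables $S,T,U,V$ shared between consecutive blocks: if $(x_X,x_Y,x_Z)=(\top,\bot,\top)$ then the first block forces $S=T=\bot$, the third forces $U=V=\bot$, and the middle block has no true literal, so the gadget is unsatisfiable; for each of the other seven assignments to $(x_X,x_Y,x_Z)$ one checks that the chain can be completed, so the projection onto $\{X,Y,Z\}$ is exactly the seven satisfying patterns of the disjunction. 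General polarities are handled by negating the corresponding index inside the $T$-blocks, which Definition \ref{def: TCNF} permits. This is a constant number of blocks and fresh variables per clause, hence a polynomial (indeed linear) reduction. So your proposal is correct as it stands via the 1-in-3 SAT citation, and your ``self-contained'' variant becomes the paper's proof once this explicit chain gadget is inserted and the eight-case projection identity is verified.
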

\begin{proof}
TCNF$B$O(B3CNF$B$N(B$B$?(B$B$a(B$BL@(B$B$i(B$B$+(B$B$K(BNP$B$K(B$BB0(B$B$9(B$B$k(B$B!#(B$B$h(B$B$C(B$B$F(B$B!"(B3CNF$B$+(B$B$i(BTCNF$B$X(B$B$N(B$BB?(B$B9`(B$B<0(B$B4T(B$B85(B$B$r(B$B<((B$B$9(B$B$3(B$B$H(B$B$G(BTCNF$B$,(BNP$B40(B$BA4(B$B$G(B$B$"(B$B$k(B$B$3(B$B$H(B$B$r(B$B<((B$B$9(B$B$3(B$B$H(B$B$,(B$B$G(B$B$-(B$B$k(B$B!#(B

3CNF$B$r(BTCNF$B$K(B$B4T(B$B85(B$B$9(B$B$k(B$B!#(B3CNF$B$N(B$B@a(B$C_{\overline{X}Y\overline{Z}}$$B$r(BTCNF$B$K(B$BJQ(B$B49(B$B$9(B$B$k(B$B$3(B$B$H(B$B$r(B$B9M(B$B$((B$B$k(B$B!#(BTCNF$B$N(B$B<0(B

$T_{STX}\wedge T_{TUY}\wedge T_{UVZ}$

$B$O(B$B!"(B

$\left(x_{X},x_{Y},x_{Z}\right)=\left(\top,\bot,\top\right)\leftrightarrow\forall x_{S},x_{T},x_{U},x_{V}\left(T_{STX}\wedge T_{TUY}\wedge T_{UVZ}=\bot\right)$

$B$H(B$B$J(B$B$j(B$B!"(B

$C_{\overline{X}Y\overline{Z}}\equiv T_{STX}\wedge T_{TUY}\wedge T_{UVZ}$

$B$H(B$B$J(B$B$k(B$B!#(B$B$^(B$B$?(B$B!"(B$T_{STX}\wedge T_{TUY}\wedge T_{UVZ}$$B$O(B$C_{\overline{X}Y\overline{Z}}$$B$N(B$B9b(B$B!9(B$BDj(B$B?t(B$B5,(B$BLO(B$B$N(B$BBg(B$B$-(B$B$5(B$B$G(B$B$"(B$B$j(B$B!"(B3CNF$B$r(B$B4T(B$B85(B$B$7(B$B$?(BTCNF$B$O(B$B9b(B$B!9(B$BB?(B$B9`(B$B<0(B$B5,(B$BLO(B$B$K(B$B$7(B$B$+(B$B$J(B$B$i(B$B$J(B$B$$(B$B!#(B

$B0J(B$B>e(B$B$h(B$B$j(B$B!"(BTCNF$B$O(BNP$B40(B$BA4(B$B$G(B$B$"(B$B$k(B$B$3(B$B$H(B$B$,(B$B<((B$B$9(B$B$3(B$B$H(B$B$,(B$B$G(B$B$-(B$B$?(B$B!#(B\end{proof}
\begin{thm}
\label{thm: TCNF=00306E=0065E2=007D04=006027}$T_{PQR}\in TCNF$$B$K(B$B$D(B$B$$(B$B$F(B$B!"(B$T_{PQR},\left[T_{PQR}\right]$$B$O(B$B6&(B$B$K(B$BD>(B$B@Q(B$B4{(B$BLs(B$B$H(B$B$J(B$B$k(B$B!#(B\end{thm}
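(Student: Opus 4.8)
The plan is to unwind Definition~\ref{def: =007BC0=00306E=0065E2=007D04=006027}. To prove that $T_{PQR}$ and $[T_{PQR}]$ are directly product irreducible I must show that \emph{no} subformula $x$ of either one admits a genuine direct product decomposition $x=y\times z$; by Definition~\ref{def: =0076F4=007A4D=005206=0089E3}, ``genuine'' means that no injection carries the clauses of $x$ into the clauses of $y\cup z$, i.e.\ $\left|x\right|>\left|y\cup z\right|$. Writing each clause of $y\times z$ as the clause whose literal set is $y_{i}\cup z_{j}$ for clauses $y_{i}\in y$ and $z_{j}\in z$, my first step is a size estimate. A one-clause factor is harmless: if $y=\{a\}$ then $a\cup z_{j}\mapsto z_{j}$ injects $x$ into $y\cup z$, so a genuine decomposition needs $\left|y\right|,\left|z\right|\ge 2$. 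When the factors are disjoint and the product clauses are pairwise distinct this gives $\left|x\right|=\left|y\right|\left|z\right|$, and genuineness forces $(\left|y\right|-1)(\left|z\right|-1)>1$, hence $\left|x\right|\ge 6$. Consequently the only genuine products with fewer than six clauses must arise from \emph{collisions} (two index pairs giving the same clause) or from factors that share clauses.

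Recall that $[T_{PQR}]=\overline{T_{PQR}}=c_{PQ\overline{R}}\wedge c_{P\overline{Q}R}\wedge c_{\overline{P}QR}$ has three clauses and $T_{PQR}=c_{\overline{P}\overline{Q}}\wedge c_{\overline{Q}\overline{R}}\wedge c_{\overline{P}\overline{R}}\wedge c_{PQR}$ has four. For $[T_{PQR}]$ every subformula has at most three clauses, so by the estimate a genuine decomposition can only come from a collision. With three clauses the sole collision pattern is $y=z=\{a,b\}$, whose product is $\{a,\ a\cup b,\ b\}$; this demands that one clause strictly contain two others as literal sets. But the three clauses of $[T_{PQR}]$ each have exactly three literals, so none contains another and the pattern is impossible. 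Hence $[T_{PQR}]$ is directly product irreducible.

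For $T_{PQR}$ the one- and two-clause subformulas fail the estimate outright, and no three-clause subformula has the $\{a,\ a\cup b,\ b\}$ shape either, because no clause of $T_{PQR}$ contains another: $c_{\overline{P}\overline{Q}},c_{\overline{Q}\overline{R}},c_{\overline{P}\overline{R}}$ are pairwise incomparable two-literal clauses and $c_{PQR}$ shares no literal with any of them. It then remains to exclude a $2\times2$ decomposition of the full formula. Every $2\times2$ product satisfies the diagonal identity $(y_{1}\cup z_{1})\cup(y_{2}\cup z_{2})=(y_{1}\cup z_{2})\cup(y_{2}\cup z_{1})$, so its four clauses split into two pairs with equal union. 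In any such split of $T_{PQR}$ the positive clause $c_{PQR}$ lies in one pair, forcing that pair's union to contain $P,Q,R$, while the complementary pair consists of two negative clauses and contains no positive literal; the two unions therefore differ. No valid split exists, so $T_{PQR}$ is not a $2\times2$ product, and it is directly product irreducible.

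I expect the delicate point to be the degenerate products -- collisions and shared or empty factor-clauses -- which are exactly what stop the bare count ``fewer than six clauses'' from closing the argument by itself, and which force the extra containment and diagonal checks above. The underlying reason these degeneracies cannot occur is structural: the three defining clauses of each formula pairwise share a single \emph{distinct} variable, $P$, $Q$ and $R$ respectively, so their sharing graph is a triangle, whereas the clauses of a genuine product inherit the bipartite row/column sharing of a grid, which contains no such odd cycle. The real work of the proof is to convert this ``a triangle is not a sub-grid'' picture into the explicit literal-counting and positive/negative arguments, so that no accidental coincidence of literals across the factors is overlooked.
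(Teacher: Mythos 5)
Your argument is correct in substance but follows a genuinely different route from the paper. The paper proves irreducibility by a single structural criterion: writing a hypothetical decomposition as $A\times B$ and letting $B_{\left[a\right]}$ denote the set of partners of a factor clause $a$, it first shows that if every pair $a_{p},a_{q}\in A$ satisfies $\left|B_{\left[a_{p}\right]}\cap B_{\left[a_{q}\right]}\right|\le1$ then an injection into $A\cup B$ exists, so a genuine decomposition forces two factor clauses to share at least two partners; it then exhaustively enumerates every way a clause of $T_{PQR}$ or $\left[T_{PQR}\right]$ can be split as a union $a\cup b$ and checks that all partner sets pairwise intersect in at most one element. You instead combine a cardinality estimate ($\left|y\right|\left|z\right|>\left|y\right|+\left|z\right|$ forces at least six clauses when factors are disjoint and collision-free), a containment observation (no clause of either formula contains another, killing the $\left\{ a,\,b,\,a\cup b\right\} $ collision pattern), and a polarity/diagonal-identity argument for the $2\times2$ case. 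Your route is more conceptual and avoids the paper's literal-by-literal enumeration, while the paper's criterion ($\left|B_{\left[a_{p}\right]}\cap B_{\left[a_{q}\right]}\right|>1$) is exactly the ``two rows sharing two columns'' fact that your diagonal identity also encodes, so the two proofs rest on the same underlying combinatorics. One loose end in your write-up: for the full four-clause $T_{PQR}$ you jump to ``it remains to exclude a $2\times2$ decomposition,'' but a genuine decomposition with $\left|y\cup z\right|\le3$ could also be $2\times3$ or $3\times3$ with heavily overlapping factors; those cases put $a$, $b$ and $a\cup b$ simultaneously into the product for distinct shared clauses $a,b$, so your own no-containment observation disposes of them, but you should say so explicitly rather than leave the reader to notice it.
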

\begin{proof}
$BGX(B$BM}(B$BK!(B$B$h(B$B$j(B$B>Z(B$BL@(B$B$9(B$B$k(B$B!#(B$T_{PQR},\left[T_{PQR}\right]$$B$N(B$B$$(B$B$:(B$B$l(B$B$+(B$B$K(B$BD>(B$B@Q(B$BJ,(B$B2r(B$A\times B$$B$,(B$BB8(B$B:_(B$B$9(B$B$k(B$B$H(B$B2>(B$BDj(B$B$9(B$B$k(B$B!#(B

$A\times B$$B$K(B$B$D(B$B$$(B$B$F(B$a\in A$$B$r(B$BBe(B$BI=(B$B85(B$B$H(B$B$9(B$B$k(B$BF1(B$BCM(B$B4X(B$B78(B$B$r(B$\left[a\right]$$B$H(B$B$7(B$B!"(B$\left[a\right]$$B$K(B$B$*(B$B$1(B$B$k(B$B$$B$N(B$B@.(B$BJ,(B$B$X(B$B$N(B$B<M(B$B1F(B$B$r(B$B78(B$B?t(B$B_{\left[a\right]}$$B$H(B$B$9(B$B$k(B$B!#(B

$B_{\left[a\right]}$$B$H(B$A\times B$$B$N(B$B4X(B$B78(B$B$r(B$B9M(B$B$((B$B$k(B$B!#(B$BG$(B$B0U(B$B$N(B$a_{p},a_{q}\in A$$B$K(B$B$D(B$B$$(B$B$F(B$\left|B_{\left[a_{p}\right]}\cap B_{\left[a_{q}\right]}\right|\leq1$$B$H(B$B$J(B$B$k(B$B>l(B$B9g(B$B!"(B$t\in A\times B$$B$O(B$B!"(B$b\in B_{\left[a_{p}\right]}\cap B_{\left[a_{q}\right]}$$B$r(B$B4^(B$B$`(B$B;~(B$B$O(B$a_{p},a_{q}$$B$+(B$B$i(B$B!"(B$B4^(B$B$^(B$B$J(B$B$$(B$B;~(B$B$O(B$b$$B$+(B$B$i(B$B0l(B$B0U(B$B$K(B$B7h(B$BDj(B$B$9(B$B$k(B$B$3(B$B$H(B$B$,(B$B$G(B$B$-(B$B$k(B$B!#(B$B$^(B$B$?(B$B!"(B$BA0(B$B=R(B\ref{def: =007BC0=00306E=0076F4=007A4D}$B$N(B$B9=(B$B@.(B$B$h(B$B$j(B$B!"(B$A\times B$$B$N(B$B@a(B$B$K(B$B$O(B$BBP(B$B1~(B$B$9(B$B$k(B$a\in A,\, b\in B$$B$,(B$BI,(B$B$:(B$BB8(B$B:_(B$B$9(B$B$k(B$B!#(B$B$=(B$B$N(B$B7k(B$B2L(B$B!"(B$BC1(B$B<M(B$T_{PQR}\mapsto A\cup B$$B$,(B$BB8(B$B:_(B$B$7(B$B!"(B$B2>(B$BDj(B$B>r(B$B7o(B$B$H(B$BL7(B$B=b(B$B$9(B$B$k(B$B!#(B$B=>(B$B$C(B$B$F(B$\left|B_{\left[a_{p}\right]}\cap B_{\left[a_{q}\right]}\right|>1$$B$G(B$B$"(B$B$k(B$BI,(B$BMW(B$B$,(B$B$"(B$B$k(B$B!#(B

$B$3(B$B$3(B$B$G(B$T_{PQR}$$B$N(B$B9=(B$BB$(B$B$h(B$B$j(B$\left|B_{\left[a_{p}\right]}\cap B_{\left[a_{q}\right]}\right|$$B$r(B$B9M(B$B$((B$B$k(B$B!#(B$B9=(B$B@.(B$B2D(B$BG=(B$B$J(B$\left(a,B_{\left[a\right]}\right)$$B$N(B$BAH(B$B9g(B$B$;(B$B$O(B$B!"(B

$\left(c_{P},\left\{ c_{QR}\right\} \right)$,$\left(c_{\overline{P}},\left\{ c_{\overline{Q}},c_{\overline{R}}\right\} \right)$,$\left(c_{Q},\left\{ c_{PR}\right\} \right)$,$\left(c_{\overline{Q}},\left\{ c_{\overline{P}},c_{\overline{R}}\right\} \right)$,

$\left(c_{R},\left\{ c_{PQ}\right\} \right)$,$\left(c_{\overline{R}},\left\{ c_{\overline{P}},c_{\overline{Q}}\right\} \right)$,$\left(c_{PQ},\left\{ c_{R}\right\} \right)$,$\left(c_{QR},\left\{ c_{P}\right\} \right)$,$\left(c_{PR},\left\{ c_{Q}\right\} \right)$

$B$H(B$B$J(B$B$k(B$B!#(B$B=>(B$B$C(B$B$F(B$BG$(B$B0U(B$B$N(B$B_{\left[a_{p}\right]},B_{\left[a_{q}\right]}$$B$N(B$BAH(B$B9g(B$B$;(B$B$K(B$B$*(B$B$$(B$B$F(B$\left|B_{\left[a_{p}\right]}\cap B_{\left[a_{q}\right]}\right|\leq1$$B$G(B$B$"(B$B$j(B$B!"(B$B2>(B$BDj(B$B>r(B$B7o(B$B$H(B$BL7(B$B=b(B$B$9(B$B$k(B$B!#(B

$BF1(B$BMM(B$B$K(B$\left[T_{PQR}\right]$$B$N(B$B9=(B$BB$(B$B$h(B$B$j(B$\left|B_{\left[a_{p}\right]}\cap B_{\left[a_{q}\right]}\right|$$B$r(B$B9M(B$B$((B$B$k(B$B!#(B$B9=(B$B@.(B$B2D(B$BG=(B$B$J(B$\left(a,B_{\left[a\right]}\right)$$B$N(B$BAH(B$B9g(B$B$;(B$B$O(B$B!"(B

$\left(c_{P},\left\{ c_{Q\overline{R}},c_{\overline{Q}R}\right\} \right)$,$\left(c_{\overline{P}},\left\{ c_{QR}\right\} \right)$,

$\left(c_{Q},\left\{ c_{P\overline{R}},c_{\overline{P}R}\right\} \right)$,$\left(c_{\overline{Q}},\left\{ c_{PR}\right\} \right)$,

$\left(c_{R},\left\{ c_{P\overline{Q}},c_{\overline{P}Q}\right\} \right)$,$\left(c_{\overline{R}},\left\{ c_{PQ}\right\} \right)$,

$\left(c_{PQ},\left\{ c_{\overline{R}}\right\} \right)$,$\left(c_{\overline{P}Q},\left\{ c_{R}\right\} \right)$,$\left(c_{P\overline{Q}},\left\{ c_{R}\right\} \right)$,

$\left(c_{QR},\left\{ c_{\overline{P}}\right\} \right)$,$\left(c_{\overline{Q}R},\left\{ c_{P}\right\} \right)$,$\left(c_{Q\overline{R}},\left\{ c_{P}\right\} \right)$,

$\left(c_{PR},\left\{ c_{\overline{Q}}\right\} \right)$,$\left(c_{\overline{P}R},\left\{ c_{Q}\right\} \right)$,$\left(c_{P\overline{R}},\left\{ c_{Q}\right\} \right)$

$B$H(B$B$J(B$B$k(B$B!#(B$B=>(B$B$C(B$B$F(B$BG$(B$B0U(B$B$N(B$B_{\left[a_{p}\right]},B_{\left[a_{q}\right]}$$B$N(B$BAH(B$B9g(B$B$;(B$B$K(B$B$*(B$B$$(B$B$F(B$\left|B_{\left[a_{p}\right]}\cap B_{\left[a_{q}\right]}\right|\leq1$$B$G(B$B$"(B$B$j(B$B!"(B$B2>(B$BDj(B$B>r(B$B7o(B$B$H(B$BL7(B$B=b(B$B$9(B$B$k(B$B!#(B

$B$h(B$B$C(B$B$F(B$BGX(B$BM}(B$BK!(B$B$h(B$B$j(B$BDj(B$BM}(B$B$,(B$B<((B$B$5(B$B$l(B$B$?(B$B!#(B
\end{proof}
TCNF$B$r(B$B7R(B$B$.(B$B$"(B$B$o(B$B$;(B$B$F(B$BBg(B$B$-(B$B$J(BCNF$B$r(B$B9=(B$B@.(B$B$9(B$B$k(B$B!#(B
\begin{defn}
\label{def: CCNF}$F\in TCNF$$B$K(B$B$*(B$B$$(B$B$F(B$B!"(B$B$=(B$B$l(B$B$>(B$B$l(B$B$N(B$F\supset T\in TCNF$$B$r(B$BD:(B$BE@(B$B$H(B$B$7(B$BJQ(B$B?t(B$B$r(B$BJU(B$B$H(B$B$7(B$B$?(B$B
\begin{thm}
\label{thm: CMUC=00306E=005B58=005728}$M_{k}\in CMUC$$B$,(B$BB8(B$B:_(B$B$9(B$B$k(B$B!#(B\end{thm}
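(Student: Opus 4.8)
The plan is to exhibit a CMUC explicitly by building a CCNF on the skeleton of a genuine Moore graph and then checking the two defining properties of a core, namely unsatisfiability together with clause-minimality. Recall that a Moore graph of diameter two is a $k$-regular graph on $k^{2}+1$ vertices with girth five, and that such graphs are known to exist for small $k$; the case $k=3$ is the Petersen graph, which is especially natural here because each triangle $T_{PQR}\in TCNF$ carries exactly three variables and hence fits a $3$-regular skeleton. Following Definition~\ref{def: CCNF} I would take such a Moore graph $G_{k}$, place a triangle gadget $T_{PQR}$ at each vertex, and identify along each edge the single joint variable shared by the two incident triangles, so that the incidence graph of the resulting formula $M_{k}$ is exactly $G_{k}$.

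First I would prove that $M_{k}$ is unsatisfiable. Each vertex triangle enforces, through $c_{\overline{P}\overline{Q}}\wedge c_{\overline{Q}\overline{R}}\wedge c_{\overline{P}\overline{R}}\wedge c_{PQR}$, that exactly one of its three variables is true, and the chaining identity $C_{\overline{X}Y\overline{Z}}\equiv T_{STX}\wedge T_{TUY}\wedge T_{UVZ}$ from the proof of Theorem~\ref{thm: TCNF=00306ENP=005B8C=005168=006027} lets me read the wired gadgets as a system of ternary constraints on the shared variables. I would arrange the wiring on $G_{k}$ so that these constraints realise an unsatisfiable ternary system; the closed, vertex-transitive structure of the Moore graph guarantees that every variable is shared by the correct number of triangles, so that the obstruction is global and cannot be localised to any single gadget.

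Next I would establish minimality, i.e. that deleting any one clause of $M_{k}$ leaves a satisfiable formula. The key leverage is Theorem~\ref{thm: TCNF=00306E=0065E2=007D04=006027}: since every $T_{PQR}$ and every $\left[T_{PQR}\right]$ is product irreducible, no clause of a triangle can be recovered from the remaining clauses of that same triangle by a product decomposition, so removing a clause genuinely weakens exactly one gadget. It then remains to argue that the weakened gadget cannot be re-derived through the rest of the graph, and this is exactly where the girth-five property enters: the absence of short cycles means there is no alternative resolution path closing the lost constraint, so the broken triangle frees one variable, the surrounding constraints propagate without conflict, and a satisfying assignment of the reduced formula is obtained. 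Hence every clause is essential, $M_{k}$ is minimal, and $M_{k}\in CMUC$.

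The step I expect to be the main obstacle is this minimality argument rather than the unsatisfiability. One must convert the purely graph-theoretic Moore property (girth five, no short alternative paths) into the logical statement that a deleted constraint cannot be re-derived by resolution through the remaining triangles, and this requires tracking how the product irreducibility of each gadget interacts with the cycle structure of $G_{k}$ so that breaking one triangle really does liberate a variable instead of merely displacing the contradiction elsewhere in the graph. Once that interaction is controlled, $M_{k}$ is an unsatisfiable, clause-minimal CCNF, and the existence of an element of CMUC follows.
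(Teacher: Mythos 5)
Your proposal diverges from the paper's proof in a way that leaves the two essential steps unproven. The paper argues by induction on $k$: it writes down an explicit base case $M_{1}=T_{PQR}\wedge T_{P-S\overline{T}}\wedge T_{Q-T\overline{U}}\wedge T_{R-U\overline{S}}$, and then gives an explicit local expansion rule that replaces an adjacent pair of triangles $O_{n}=T_{M-PQ}\wedge T_{N-\overline{Q}R}$ (resp. $A_{n}=T_{M-PQ}\wedge T_{N-QR}$) by a six-triangle configuration $O_{n,n+1}$ (resp. $A_{n,n+1}$) that induces the \emph{same} equality/inequality relations on its boundary variables ($X_{M}=X_{N}\rightarrow X_{P}\neq X_{R}$, etc.); unsatisfiability and minimality of $M_{n+1}$ are inherited from $M_{n}$ because these relations are preserved, and the evenness of the inner circuit of the degree-3 Moore-graph skeleton is what lets adjacent triangles be paired off. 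You instead propose a one-shot construction on an actual Moore graph (the Petersen graph) and then say you ``would arrange the wiring \dots so that these constraints realise an unsatisfiable ternary system.'' That sentence is the entire content of the theorem: without specifying the wiring and verifying the resulting global parity obstruction, unsatisfiability is assumed rather than proved. Likewise your minimality step --- that girth five means ``there is no alternative resolution path'' so the broken triangle ``frees one variable'' and ``the surrounding constraints propagate without conflict'' --- is an assertion, not an argument; product irreducibility of a single gadget (Theorem~\ref{thm: TCNF=00306E=0065E2=007D04=006027}) says nothing about whether the \emph{global} contradiction survives the deletion of one clause, which is exactly what must be checked.

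There is also a structural mismatch with how the theorem is used later. Theorem~\ref{thm: CMUC=00306E=008907=0096D1=006027} needs an infinite family $M_{n}$ with controlled growth as $n\gg0$, and the paper's induction supplies one. Moore graphs of diameter two and girth five exist only for degrees $2$, $3$, $7$ (and possibly $57$), so a construction that literally requires a genuine Moore graph for each $k$ produces at most a handful of formulas and cannot feed the asymptotic argument. If you want to salvage your route, you must (i) fix the wiring concretely and prove unsatisfiability, (ii) prove clause-minimality by exhibiting a satisfying assignment for each clause-deleted formula, and (iii) replace ``take a Moore graph for each $k$'' by an iterative expansion of the kind the paper uses, so that the family is actually infinite.
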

\begin{proof}
$B5"(B$BG<(B$BK!(B$B$K(B$B$h(B$B$j(B$B>Z(B$BL@(B$B$9(B$B$k(B$B!#(B

$B:G(B$B=i(B$B$K(B$k=1$$B$N(B$B>l(B$B9g(B$B$r(B$B9M(B$B$((B$B$k(B$B!#(B

$t_{0}=T_{PQR}$

$B$H(B$B$7(B$B$F(B

$M_{1}=T_{PQR}\wedge T_{P-S\overline{T}}\wedge T_{Q-T\overline{U}}\wedge T_{R-U\overline{S}}$

$B$H(B$B$9(B$B$k(B$B$H(B$M_{1}\in MUC$$B$H(B$B$J(B$B$k(B$B!#(B$B$h(B$B$C(B$B$F(B$k=1$$B$K(B$B$*(B$B$$(B$B$F(BCMUC$B$O(B$BB8(B$B:_(B$B$9(B$B$k(B$B!#(B

$B<!(B$B$K(B$k=n$$B$N(B$B;~(B$M_{n}\in MUC$$B$,(B$B@.(B$BN)(B$B$9(B$B$k(B$B$H(B$B$9(B$B$k(B$B!#(B

$T_{M-PQ}\wedge T_{N-\overline{Q}R}=O_{n}\subset M_{n}$$B$r(B$B9M(B$B$((B$B$k(B$B!#(B$T_{M-PQ}\wedge T_{N-\overline{Q}R}$$B$O(B

$X_{M}=X_{N}\rightarrow X_{P}\neq X_{R}$

$X_{M}\neq X_{N}\rightarrow X_{P}=X_{R}$

$B$H(B$B$J(B$B$k(B$B!#(B$M_{n}$$B$+(B$B$i(B$M_{n+1}$$B$r(B$B9=(B$B@.(B$B$9(B$B$k(B$B>l(B$B9g(B$B!"(B$T_{M-PQ}\wedge T_{N-\overline{Q}R}$$B$K(B$B$D(B$B$$(B$B$F(B$B$O(B$B6&(B$BDL(B$B$N(B$BJQ(B$B?t(B$X_{Q}$$B$,(B$BFH(B$BN)(B$B$9(B$B$k(B$B$h(B$B$&(B$B$K(B$B?7(B$B$7(B$B$/(B$X_{S}$$B$r(B$BMQ(B$B0U(B$B$7(B$B!"(B$X_{Q},X_{S}$$B$,(B$B@)(B$BLs(B$B$r(B$BK~(B$B$?(B$B$9(B$B$h(B$B$&(B$B$K(B$T_{n+1}$$B$r(B$BDI(B$B2C(B$B$9(B$B$k(B$BI,(B$BMW(B$B$,(B$B$"(B$B$k(B$B!#(B$B$3(B$B$N(B$B$h(B$B$&(B$B$J(B$B>r(B$B7o(B$B$r(B$BK~(B$B$?(B$B$9(B$B0l(B$BNc(B$B$H(B$B$7(B$B$F(B

$O_{n,n+1}=T_{M-PQ}\wedge T_{N-RS}\wedge T_{P-UV}\wedge T_{R-VW}\wedge T_{\overline{Q}-XY}\wedge T_{\overline{S}-\overline{Y}Z}$

$B$r(B$B5s(B$B$2(B$B$k(B$B$3(B$B$H(B$B$,(B$B$G(B$B$-(B$B$k(B$B!#(B$B$3(B$B$N(B$B>l(B$B9g(B$B!"(B

$X_{M}=X_{N}\neq\top\rightarrow\left(X_{U}=X_{W}\right)\neq\left(X_{X}=X_{Z}\right)$

$X_{M}\neq X_{N}\rightarrow\left(X_{U}=X_{W}\right)=\left(X_{X}=X_{Z}\right)$

$B$H(B$B$J(B$B$k(B$B!#(B

$BF1(B$BMM(B$B$K(B$T_{M-PQ}\wedge T_{N-QR}=A_{n}\subset M_{n}$$B$r(B$B9M(B$B$((B$B$k(B$B!#(B$T_{M-PQ}\wedge T_{N-QR}$$B$O(B

$X_{M}=X_{N}\rightarrow X_{P}=X_{R}$

$X_{M}\neq X_{N}\rightarrow X_{P}\neq X_{R}$

$B$H(B$B$J(B$B$k(B$B!#(B$T_{M-PQ}\wedge T_{N-QR}$$B$K(B$B$D(B$B$$(B$B$F(B$B$b(B$T_{M-PQ}\wedge T_{N-\overline{Q}R}$$B$H(B$BF1(B$BMM(B$B$K(B$M_{n+1}$$B$r(B$B9=(B$B@.(B$B$9(B$B$k(B$B$3(B$B$H(B$B$,(B$B$G(B$B$-(B$B$k(B$B!#(B$B$3(B$B$N(B$B$h(B$B$&(B$B$J(B$B>r(B$B7o(B$B$r(B$BK~(B$B$?(B$B$9(B$B0l(B$BNc(B$B$H(B$B$7(B$B$F(B

$A_{n,n+1}=T_{M-PQ}\wedge T_{N-RS}\wedge T_{P-UV}\wedge T_{R-VW}\wedge T_{Q-XY}\wedge T_{S-YZ}$

$B$r(B$B5s(B$B$2(B$B$k(B$B$3(B$B$H(B$B$,(B$B$G(B$B$-(B$B$k(B$B!#(B$B$3(B$B$N(B$B>l(B$B9g(B$B!"(B

$X_{M}=X_{N}\rightarrow\left(X_{U}=X_{W}\right)=\left(X_{X}=X_{Z}\right)$

$X_{M}\neq X_{N}\rightarrow\left(X_{U}=X_{W}\right)\neq\left(X_{X}=X_{Z}\right)$

$B$H(B$B$J(B$B$k(B$B!#(B

$B$3(B$B$3(B$B$G(B$M_{n}$$B$K(B$B$*(B$B$1(B$B$k(B$T_{n}$$B$N(B$B9=(B$BB$(B$B$r(B$B9M(B$B$((B$B$k(B$B!#(BMoore~Graph$B$N(B$B9=(B$BB$(B$B$h(B$B$j(B$B!"(B$T_{n}$$BF1(B$B;N(B$B$O(B$B$*(B$B8_(B$B$$(B$B$K(B$BNY(B$B@\(B$B$7(B$B!"(B$BA4(B$BBN(B$B$H(B$B$7(B$B$F(B$B0l(B$B$D(B$B$N(B$B1_(B$B<~(B$B$r(B$B9=(B$B@.(B$B$9(B$B$k(B$B!#(B$B>e(B$B5-(B$B$N(B$X_{P},X_{R}$$B$b(B$B$^(B$B$?(B$B1_(B$B<~(B$B$N(B$B0l(B$BIt(B$B$H(B$B$J(B$B$k(B$B!#(B$B$^(B$B$?(B$B!"(B$B<!(B$B?t(B3$B$N(BMoore~graph$B$N(B$BFb(B$B<~(B$B$O(B$B6v(B$B?t(B$B$H(B$B$J(B$B$k(B$B$?(B$B$a(B$B!"(B$BNY(B$B9g(B$B$&(B2$B$D(B$B$N(B$T$$B$N(B$B7R(B$B$,(B$B$j(B$B$H(B$B9M(B$B$((B$B$k(B$B$3(B$B$H(B$B$,(B$B$G(B$B$-(B$B$k(B$B!#(B$B$=(B$B$7(B$B$F(B$B!"(B$B$3(B$B$N(B$B1_(B$B<~(B$B$O(B$BA0(B$B=R(B$B$N(B$BDL(B$B$j(B$\left(X_{M},X_{N}\right)$$B$N(B$B4X(B$B78(B$B$h(B$B$j(B$\left(X_{P},X_{R}\right)$$B$N(B$B4X(B$B78(B$B$,(B$B7h(B$B$^(B$B$k(B$B!#(B$B$^(B$B$?(B$B$3(B$B$N(B$B4X(B$B78(B$B$O(B$BNY(B$B9g(B$B$&(B2$B$D(B$B$N(B$T$$B$N(B$B6&(B$BDL(B$B$9(B$B$k(B$BJQ(B$B?t(B$X_{Q}$$B$K(B$B$h(B$B$j(B$B7h(B$BDj(B$B$9(B$B$k(B$B$3(B$B$H(B$B$,(B$B$G(B$B$-(B$B$k(B$B!#(B

$BF1(B$BMM(B$B$K(B$M_{n+1}$$B$K(B$B$*(B$B$1(B$B$k(B$T_{n+1}$$B$N(B$B9=(B$BB$(B$B$r(B$B9M(B$B$((B$B$k(B$B!#(BMoore~Graph$B$N(B$B9=(B$BB$(B$B$h(B$B$j(B$B!"(B$T_{n+1}$$BF1(B$B;N(B$B$O(B$B$*(B$B8_(B$B$$(B$B$K(B$BNY(B$B@\(B$B$7(B$B!"(B$BA4(B$BBN(B$B$H(B$B$7(B$B$F(B$B0l(B$B$D(B$B$N(B$B1_(B$B<~(B$B$r(B$B9=(B$B@.(B$B$9(B$B$k(B$B!#(B$B>e(B$B5-(B$B$N(B$X_{U},X_{W},X_{X},X_{Z}$$B$b(B$B$^(B$B$?(B$B1_(B$B<~(B$B$N(B$B0l(B$BIt(B$B$H(B$B$J(B$B$k(B$B!#(B$B$^(B$B$?(B$B!"(B$B<!(B$B?t(B3$B$N(BMoore~graph$B$N(B$BFb(B$B<~(B$B$O(B$B6v(B$B?t(B$B$H(B$B$J(B$B$k(B$B$?(B$B$a(B$B!"(B$BNY(B$B9g(B$B$&(B2$B$D(B$B$N(B$T$$B$N(B$B7R(B$B$,(B$B$j(B$B$H(B$B9M(B$B$((B$B$k(B$B$3(B$B$H(B$B$,(B$B$G(B$B$-(B$B$k(B$B!#(B$B$=(B$B$7(B$B$F(B$B!"(B$B$3(B$B$N(B$B1_(B$B<~(B$B$O(B$BA0(B$B=R(B$B$N(B$BDL(B$B$j(B$\left(X_{M},X_{N}\right)$$B$N(B$B4X(B$B78(B$B$h(B$B$j(B$\left(X_{U},X_{W}\right),\left(X_{X},X_{Z}\right)$$B$N(B$B4X(B$B78(B$B$,(B$B7h(B$B$^(B$B$k(B$B!#(B$B$^(B$B$?(B$B$3(B$B$N(B$B4X(B$B78(B$B$O(B$BNY(B$B9g(B$B$&(B2$B$D(B$B$N(B$T$$B$N(B$B6&(B$BDL(B$B$9(B$B$k(B$BJQ(B$B?t(B$\left(X_{V},X_{Y}\right)$$B$K(B$B$h(B$B$j(B$B7h(B$BDj(B$B$9(B$B$k(B$B$3(B$B$H(B$B$,(B$B$G(B$B$-(B$B$k(B$B!#(B

$B$D(B$B$^(B$B$j(B$B!"(B$M_{n}$$B$K(B$B$*(B$B$1(B$B$k(B$T_{n}$$B$N(B$B4X(B$B78(B$B$H(B$BF1(B$BMM(B$B$K(B$B$J(B$B$k(B$B$h(B$B$&(B$B$K(B$M_{n+1}$$B$K(B$B$*(B$B$1(B$B$k(B$T_{n+1}$$B$N(B$B4X(B$B78(B$B$r(B$BDj(B$B$a(B$B$k(B$B$3(B$B$H(B$B$G(B$B!"(B$M_{n+1}\in MUC$$B$r(B$B9=(B$B@.(B$B$9(B$B$k(B$B$3(B$B$H(B$B$,(B$B$G(B$B$-(B$B$k(B$B!#(B

$B$h(B$B$C(B$B$F(B$B5"(B$BG<(B$BK!(B$B$h(B$B$j(B$BDj(B$BM}(B$B$,(B$B<((B$B$5(B$B$l(B$B$?(B$B!#(B\end{proof}
\begin{thm}
\label{thm: CMUC=00306E=008907=0096D1=006027}$F\in CMUC$$B$K(B$B$*(B$B$$(B$B$F(B$B!"(B$BA4(B$B$F(B$B$N(B$c\in F$$B$N(B$\left|\widehat{\left[c\right]}\right|$$B$,(B$BB?(B$B9`(B$B<0(B$B5,(B$BLO(B$B$K(B$BG<(B$B$^(B$B$i(B$B$J(B$B$$(B$F$$B$,(B$BB8(B$B:_(B$B$9(B$B$k(B$B!#(B\end{thm}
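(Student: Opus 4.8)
The plan is to work with the explicit family $M_{k}\in CMUC$ produced by the inductive construction of Theorem~\ref{thm: CMUC=00306E=005B58=005728} and to bound $\left|\widehat{\left[c\right]}\right|$ from below uniformly over all clauses $c$, exploiting the regularity of the underlying degree-$3$ Moore-type graph. The first step is to note that, since $M_{k}$ is an element of $MUC$, an assignment lies in $\widehat{\left[c\right]}$ precisely when it satisfies every clause of $M_{k}$ except $c$; hence $\left|\widehat{\left[c\right]}\right|$ is exactly the number of satisfying assignments of the formula obtained by deleting the single clause $c$, and this count is nonzero by minimal unsatisfiability. I would then replace each triangular block by the relation it induces on its shared (boundary) variables: by Definition~\ref{def: TCNF} a block $T_{PQR}$ forces ``exactly one of $P,Q,R$'', and the pairwise relations recorded in the proof of Theorem~\ref{thm: CMUC=00306E=005B58=005728}, such as $X_{M}=X_{N}\rightarrow X_{P}\neq X_{R}$, let me read $M_{k}$ as a system of parity-type constraints on these boundary variables laid out along the edges of the graph.

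Second, I would analyse the solution count of the opened system. The role of the Moore-graph wiring, together with the parity of the number $N$ of blocks, is to make the whole system inconsistent through a single global dependency: summing all block constraints cancels every shared variable, since each such variable occurs in exactly two blocks, and leaves a contradiction, which is exactly the mechanism that makes $M_{k}$ minimally unsatisfiable. Deleting the clause $c$ removes the block constraint carrying that obstruction, so the remaining constraints become jointly satisfiable, and the number of solutions is governed by the gap between the number of boundary variables, which is $\Theta(N)$, and the number of surviving independent constraints. Because every vertex neighbourhood up to radius below half the girth is a tree, the constraints impose no relation there and the ``exactly one'' blocks branch freely whenever a shared variable is forced to $\bot$; collecting these free branch points is what should drive $\left|\widehat{\left[c\right]}\right|$ up to $2^{\Omega(N)}$. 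Since each block contributes a bounded number of clauses and variables, $\left|M_{k}\right|=\Theta(N)$, so such a bound is super-polynomial in $\left|M_{k}\right|$.

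Third, I would promote the bound to hold for \emph{every} clause. The construction is vertex- and edge-transitive in the sense that any two blocks, and any two clauses of the same type inside a block, are exchanged by an automorphism of the Moore graph; hence the deletion of any single clause opens the same kind of global obstruction and yields the same exponential count. This gives the ``for all $c\in F$'' form of the statement for the chosen $F=M_{k}$, which is all that is required.

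The hard part will be the passage from the parity relaxation to the genuine ``exactly one'' system: summing block constraints exhibits unsatisfiability cleanly over $GF(2)$, but a parity model over-counts, so a lower bound on $\left|\widehat{\left[c\right]}\right|$ must be obtained for the true $TCNF$ constraints. I expect to handle this by exhibiting the solutions directly on the tree-like interior exposed by the large girth: propagating from the block of $c$, each variable forced to $\bot$ offers two admissible continuations of its neighbouring block while variables forced to $\top$ merely fix theirs, so the interior already carries $2^{\Omega(N)}$ mutually consistent partial assignments, and the remaining task is to verify that a super-polynomial fraction of these extend past the girth without conflict --- equivalently, that the $\Theta(N)$ cycle-closure conditions do not collapse the $\Theta(N)$-dimensional freedom created inside the girth. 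Establishing that the free branch points genuinely outnumber the binding cycle closures, so that the surviving count stays $2^{\Omega(N)}$, is the single step on which the whole estimate rests, and it is where the extremal (Moore) choice of the graph is essential.
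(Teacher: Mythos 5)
Your outline identifies the right quantity ($\left|\widehat{\left[c\right]}\right|$ equals the number of satisfying assignments of $F$ with the single clause $c$ deleted, nonzero by minimal unsatisfiability) and the right symmetry reduction, but the decisive step is missing rather than merely difficult: you never establish that the solution count after deleting $c$ is super-polynomial in $\left|F\right|$; you only announce that the $\Theta(N)$ free branch points should outnumber the $\Theta(N)$ cycle-closure conditions and defer the verification. That comparison is the entire content of the theorem --- a priori the cycle closures could kill all but polynomially many of the $2^{\Omega(N)}$ interior partial assignments, and nothing in your argument rules this out. The $\mathrm{GF}(2)$ relaxation you lean on for intuition is also lossier than you acknowledge: the blocks in the construction of Theorem \ref{thm: CMUC=00306E=005B58=005728} share variables with mixed polarities (e.g.\ $T_{P-S\overline{T}}$), so the global parity sum does not cleanly cancel, and $T_{PQR}$ forces ``exactly one of three,'' which is strictly stronger than odd parity; so neither the unsatisfiability mechanism nor the counting can be read off the affine model you propose.

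The paper closes exactly this gap by a local computation instead of a global one: it compares the two-block gadget $A_{n}$ with its one-step refinement $A_{n,n+1}$, counts the admissible boundary patterns explicitly ($5$ versus $17$) against the size ratio $1:3$, and concludes that each inductive step multiplies the ratio of assignment count to formula size by $\frac{17}{15}>1$; the symmetry of $T_{n}$ inside $M_{n}$ then transfers the estimate to every clause. Whatever one thinks of that estimate, it is a concrete substitute for the branch-versus-closure bookkeeping you leave open. To salvage your route you would have to either carry out this kind of per-step ratio count yourself or prove an explicit lower bound on the number of interior assignments surviving all cycle closures; as written, the proposal stops exactly where the proof has to begin.
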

\begin{proof}
$BA0(B$B=R(B\ref{thm: CMUC=00306E=005B58=005728}$B$K(B$B$*(B$B$1(B$B$k(B$T_{n}\subset M_{n}$$B$H(B$T_{n}\wedge T_{n+1}\subset M_{n+1}$$B$K(B$B$*(B$B$$(B$B$F(B$B!"(B

$\dfrac{\left|\left[T_{n+1}\right]\right|}{\left|\left[T_{n}\right]\right|}>\dfrac{\left|M_{n+1}\right|}{\left|M_{n}\right|}\rightarrow\dfrac{\left|\left[T_{n+1}\right]\right|}{\left|\left[T_{n}\right]\right|}\times\dfrac{\left|M_{n}\right|}{\left|M_{n+1}\right|}>1$

$B$H(B$B$J(B$B$k(B$B$3(B$B$H(B$B$h(B$B$j(B$B<((B$B$9(B$B!#(B

$B$^(B$B$:(B$B$O(B$T_{n}$$B$H(B$T_{n+1}$$B$N(B$B9=(B$BB$(B$B$K(B$BCe(B$BL\(B$B$9(B$B$k(B$B!#(B$BA0(B$B=R(B$B$N(B$BDL(B$B$j(B$B!"(B$O_{n}$$B$+(B$B$i(B$B9=(B$B@.(B$B$7(B$B$?(B$O_{n,n+1}$$B$K(B$B$O(B$B!"(B$O_{n+1}$$B$,(B$B9b(B$B!9(B1$B$D(B$B$7(B$B$+(B$B4^(B$B$^(B$B$l(B$B$:(B$B!"(B$B;D(B$B$j(B$B$N(B1$B$D(B$B$O(B$A_{n+1}$$B$H(B$B$J(B$B$k(B$B!#(B$B$D(B$B$^(B$B$j(B$M_{n}$$B$r(B$B3H(B$BD

$\dfrac{\left|\left[T_{n+1}\right]\right|}{\left|\left[T_{n}\right]\right|}\times\dfrac{\left|M_{n}\right|}{\left|M_{n+1}\right|}=\dfrac{\left|\left[A_{n,n+1}\right]\right|}{\left|\left[A_{n}\right]\right|}\times\dfrac{\left|A_{n}\right|}{\left|A_{n,n+1}\right|}\quad\left(as\quad n\gg0\right)$

$B$^(B$B$:(B$B$O(B$\dfrac{\left|A_{n}\right|}{\left|A_{n,n+1}\right|}$$B$r(B$B9M(B$B$((B$B$k(B$B!#(B$A_{n,n+1}$$B$H(B$A_{n}$$B$N(B$B9=(B$B@.(B$B$h(B$B$j(B$\dfrac{\left|A_{n}\right|}{\left|A_{n,n+1}\right|}=\dfrac{1}{3}$$B$H(B$B$J(B$B$k(B$B!#(B

$B<!(B$B$K(B$\dfrac{\left|\left[A_{n,n+1}\right]\right|}{\left|\left[A_{n}\right]\right|}$$B$r(B$B9M(B$B$((B$B$k(B$B!#(B$\left[A_{n}\right]$$B$O(B

$\left(X_{M},X_{N},X_{P},X_{R}\right)=\left(\bot,\bot,\bot,\bot\right),\left(\bot,\bot,\top,\top\right),\left(\bot,\top,\top,\bot\right),\left(\top,\bot,\bot,\top\right),\left(\top,\top,\bot,\bot\right)$

$B$N(B5$BDL(B$B$j(B$B$H(B$B$J(B$B$k(B$B$N(B$B$K(B$BBP(B$B$7(B$B!"(B$\left[A_{n,n+1}\right]$$B$O(B

$\left(X_{M},X_{N},X_{U},X_{W},X_{X},X_{Z}\right)$

$=\left(\bot,\bot,\bot,\bot,\bot,\bot\right),\left(\bot,\bot,\bot,\bot,\top,\top\right),\left(\bot,\bot,\top,\top,\bot,\bot\right),$

$\left(\bot,\bot,\bot,\top,\top,\bot\right),\left(\bot,\bot,\top,\bot,\bot,\top\right),$

$\left(\bot,\top,\bot,\bot,\bot,\top\right),\left(\bot,\top,\bot,\top,\bot,\bot\right),\left(\bot,\top,\bot,\top,\top,\top\right),\left(\bot,\top,\top,\top,\bot,\top\right),$

$\left(\top,\bot,\bot,\bot,\top,\bot\right),\left(\top,\bot,\top,\bot,\bot,\bot\right),\left(\top,\bot,\top,\bot,\top,\top\right),\left(\top,\bot,\top,\top,\top,\bot\right),$

$\left(\top,\top,\bot,\bot,\bot,\bot\right),\left(\top,\top,\bot,\bot,\top,\top\right),\left(\top,\top,\top,\top,\bot,\bot\right),\left(\top,\top,\top,\top,\top,\top\right)$

$B$N(B17$BDL(B$B$j(B$B$H(B$B$J(B$B$k(B$B!#(B$B$h(B$B$C(B$B$F(B$\dfrac{\left|\left[A_{n,n+1}\right]\right|}{\left|\left[A_{n}\right]\right|}=\dfrac{17}{5}$

$B0J(B$B>e(B$B$h(B$B$j(B$B!"(B

$\dfrac{\left|\left[T_{n+1}\right]\right|}{\left|\left[T_{n}\right]\right|}\times\dfrac{\left|M_{n}\right|}{\left|M_{n+1}\right|}=\dfrac{\left|\left[A_{n,n+1}\right]\right|}{\left|\left[A_{n}\right]\right|}\times\dfrac{\left|A_{n}\right|}{\left|A_{n,n+1}\right|}=\dfrac{17}{5}\times\dfrac{1}{3}=\dfrac{17}{15}>1\quad\left(as\quad n\gg0\right)$

$B$H(B$B$J(B$B$k(B$B!#(B$B$^(B$B$?(B$B$3(B$B$N(B$B4X(B$B78(B$B$O(B$B!"(B$T_{n}$$B$N(B$M_{n}$$B$K(B$B$*(B$B$1(B$B$k(B$BBP(B$B>N(B$B@-(B$B$h(B$B$j(B$BA4(B$B$F(B$B$N(B$T$$B$K(B$B$D(B$B$$(B$B$F(B$B@.(B$B$j(B$BN)(B$B$D(B$B!#(B$B$h(B$B$C(B$B$F(B$BDj(B$BM}(B$B$,(B$B>Z(B$BL@(B$B$G(B$B$-(B$B$?(B$B!#(B\end{proof}
\begin{thm}
\label{thm: RCNF=00306E=00898F=006A21}$O\left(\left|RCNF\left(F\right)\right|\right)>O\left(\left|F\right|^{m}\right)\quad m:constant$$B$H(B$B$J(B$B$k(B$F$$B$,(B$BB8(B$B:_(B$B$9(B$B$k(B$B!#(B\end{thm}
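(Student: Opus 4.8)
The plan is to produce a witnessing family directly, combining the granularity statement of Theorem~\ref{thm: RCNF=00306E=007C92=005EA6} with the growth estimate of Theorem~\ref{thm: CMUC=00306E=008907=0096D1=006027}. Since the claim is existential---it asserts only that \emph{some} $F$ has super-polynomially large $RCNF(F)$---I am free to take $F$ to be the CMUC family $M_{n}$ built in the proof of Theorem~\ref{thm: CMUC=00306E=005B58=005728}, for which Theorem~\ref{thm: CMUC=00306E=008907=0096D1=006027} already guarantees that $\left|\widehat{\left[c\right]}\right|$ outgrows every polynomial in $\left|F\right|$.

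First I would fix $F=M_{n}\in CMUC$ and a clause $c\in F$, and extract the quantitative core of Theorem~\ref{thm: CMUC=00306E=008907=0096D1=006027}: the ratio $\frac{\left|\left[T_{n+1}\right]\right|}{\left|\left[T_{n}\right]\right|}\times\frac{\left|M_{n}\right|}{\left|M_{n+1}\right|}=\frac{17}{15}>1$ shows that at each inductive expansion step the falsifying structure grows by a factor strictly larger than the formula size, so $\left|\widehat{\left[c\right]}\right|$ is super-polynomial in $\left|F\right|$. Next I would verify that $\widehat{\left[c\right]}$ is product irreducible in the sense of Definition~\ref{def: =007BC0=00306E=0065E2=007D04=006027}: each triangular block together with its complement is product irreducible by Theorem~\ref{thm: TCNF=00306E=0065E2=007D04=006027}, and I would argue that the Moore-graph gluing of Definition~\ref{def: CCNF}, which shares only a single joint variable between adjacent blocks, cannot manufacture the injection a direct-product decomposition demands; hence irreducibility is inherited by $\widehat{\left[c\right]}$.

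With irreducibility in hand I would invoke Theorem~\ref{thm: RCNF=00306E=007C92=005EA6}: because $F\in MUC$ and $\widehat{\left[c\right]}$ is product irreducible, $RCNF(F)$ must carry a separate antecedent for each connected component of $\widehat{\left[c\right]}$, and these antecedents cannot be amalgamated precisely because no product factorization exists to merge them. This yields the lower bound $\left|RCNF(F)\right|\geq N(c)$, where $N(c)$ is the number of connected components of $\widehat{\left[c\right]}$. The final step is to show $N(c)$ is itself super-polynomial in $\left|F\right|$; I would do this by arguing that product irreducibility forbids $\widehat{\left[c\right]}$ from collapsing into a bounded number of large components, so that $N(c)$ scales with $\left|\widehat{\left[c\right]}\right|$, already known to be super-polynomial. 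Chaining the inequalities gives $\left|RCNF(F)\right|>\left|F\right|^{m}$ for every constant $m$, which is the assertion $O(\left|RCNF(F)\right|)>O(\left|F\right|^{m})$.

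The main obstacle I anticipate is the passage from the cardinality bound on $\widehat{\left[c\right]}$ to a lower bound on the component count $N(c)$. Theorem~\ref{thm: CMUC=00306E=008907=0096D1=006027} controls $\left|\widehat{\left[c\right]}\right|$, whereas Theorem~\ref{thm: RCNF=00306E=007C92=005EA6} charges $RCNF(F)$ only for connected components, so a weakly fragmented $\widehat{\left[c\right]}$ would leave a gap. Closing it requires showing that product irreducibility forces fragmentation---equivalently, that a small number of components would itself supply the factorization ruled out by Theorem~\ref{thm: TCNF=00306E=0065E2=007D04=006027}---and that this fragmentation survives the sharing of joint variables across the Moore-graph structure of the CCNF. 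I expect this rigidity argument, rather than the routine growth estimate, to carry the weight of the proof.
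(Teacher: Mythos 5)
You follow the same route as the paper: take the CMUC family of Theorem~\ref{thm: CMUC=00306E=005B58=005728} as the witness, feed the growth estimate of Theorem~\ref{thm: CMUC=00306E=008907=0096D1=006027} into the granularity statement of Theorem~\ref{thm: RCNF=00306E=007C92=005EA6}, and read off a super-polynomial lower bound on $\left|RCNF\left(F\right)\right|$. The difference is that you make explicit two steps the paper elides, and the obstacle you flag at the end is a genuine gap --- one that the paper's own proof does not close either. Theorem~\ref{thm: RCNF=00306E=007C92=005EA6} charges $RCNF\left(F\right)$ with one antecedent per \emph{connected component} of $\widehat{\left[c\right]}$, while Theorem~\ref{thm: CMUC=00306E=008907=0096D1=006027} bounds only the \emph{cardinality} $\left|\widehat{\left[c\right]}\right|$; the paper's proof simply restates the granularity theorem without the component qualifier and then asserts that $RCNF\left(F\right)$ ``contains all of $\widehat{\left[c\right]}$ as antecedents,'' silently identifying components with elements. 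If $\widehat{\left[c\right]}$ fell into few components the conclusion would not follow, so the rigidity argument you defer to --- that product irreducibility forces the component count to scale with $\left|\widehat{\left[c\right]}\right|$ --- really is the load-bearing step, and neither your sketch nor the paper carries it out. The same holds for your second explicit step: the hypothesis of Theorem~\ref{thm: RCNF=00306E=007C92=005EA6} requires $\widehat{\left[c\right]}$ itself to be product irreducible, which you at least propose to derive from Theorem~\ref{thm: TCNF=00306E=0065E2=007D04=006027} together with an argument about the Moore-graph gluing of Definition~\ref{def: CCNF}, whereas the paper attributes it to Theorem~\ref{thm: CMUC=00306E=008907=0096D1=006027}, whose statement says nothing about irreducibility. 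So your plan is faithful to the paper's strategy, but the two items you would still need to prove are exactly the places where the paper's argument is incomplete; writing them down honestly, as you do, is the right instinct, and until they are supplied the theorem is not established by either version.
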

\begin{proof}
$c\in F$$B$H(B$B$9(B$B$k(B$B!#(B$BA0(B$B=R(B\ref{thm: RCNF=00306E=007C92=005EA6}$B$N(B$BDL(B$B$j(B$B!"(B$\widehat{\left[c\right]}$$B$,(B$BD>(B$B@Q(B$BJ,(B$B2r(B$B$G(B$B$-(B$B$J(B$B$$(B$B>l(B$B9g(B$B!"(B$RCNF\left(F\right)$$B$K(B$B$O(B$\widehat{\left[c\right]}$$B$K(B$BBP(B$B1~(B$B$9(B$B$k(B$BA0(B$B7o(B$B$,(B$BJQ(B$B?t(B$B$H(B$B$7(B$B$F(B$B4^(B$B$^(B$B$l(B$B$k(B$B!#(B$B$^(B$B$?(B$BA0(B$B=R(B\ref{thm: CMUC=00306E=008907=0096D1=006027}$B$N(B$BDL(B$B$j(B$B!"(B$BD>(B$B@Q(B$BJ,(B$B2r(B$B$G(B$B$-(B$B$J(B$B$$(B$\widehat{\left[c\right]}$$B$,(B$BB?(B$B9`(B$B<0(B$B5,(B$BLO(B$B$K(B$BG<(B$B$^(B$B$i(B$B$J(B$B$$(B$BO@(B$BM}(B$B<0(B$B$,(B$BB8(B$B:_(B$B$9(B$B$k(B$B!#(B$B$h(B$B$C(B$B$F(B$\widehat{\left[c\right]}$$B$r(B$BA4(B$B$F(B$BA0(B$B7o(B$B$H(B$B$7(B$B$F(B$B4^(B$B$`(B$\left|RCNF\left(F\right)\right|$$B$O(B$\left|F\right|$$B$N(B$BB?(B$B9`(B$B<0(B$B5,(B$BLO(B$B$H(B$B$J(B$B$i(B$B$J(B$B$$(B$B!#(B$B0J(B$B>e(B$B$h(B$B$j(B$BDj(B$BM}(B$B$,(B$B@.(B$B$j(B$BN)(B$B$D(B$B$3(B$B$H(B$B$,(B$B$o(B$B$+(B$B$k(B$B!#(B\end{proof}
\begin{thm}
\label{thm: CNF=003068HornCNF=00306E=0095A2=004FC2}$CNF\nleq_{p}RCNF\equiv_{L}HornCNF$\end{thm}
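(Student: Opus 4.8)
The plan is to treat the two assertions separately: the log-space equivalence $RCNF\equiv_{L}HornCNF$ and the non-reducibility $CNF\nleq_{p}RCNF$.

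For the equivalence I would invoke Theorem~\ref{thm: RCNF=00306EP=005B8C=005168=006027}, whose proof already supplies a two-stage log-space transformation from any $F\in HornCNF$ to $RCNF(F)$; this is exactly $HornCNF\leq_{L}RCNF$. For the reverse direction I would use the observation recorded after Definition~\ref{def: RCNF} that every resolution consequent is a single clause, so $RCNF(F)\in HornCNF$ and the identity embedding witnesses $RCNF\leq_{L}HornCNF$. Composing the two directions yields $RCNF\equiv_{L}HornCNF$; this half is routine.

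The substance lies in $CNF\nleq_{p}RCNF$, which I would argue by contradiction from the size lower bound. Assume a polynomial-time reduction from $CNF$ to $RCNF$ exists and feed it the hard family furnished by Theorem~\ref{thm: RCNF=00306E=00898F=006A21}: the $CMUC$ instances $M_{k}$ of Theorem~\ref{thm: CMUC=00306E=005B58=005728}, for which Theorem~\ref{thm: CMUC=00306E=008907=0096D1=006027} forces some $\widehat{\left[c\right]}$ to be product-irreducible, in the sense of Definition~\ref{def: =007BC0=00306E=0065E2=007D04=006027}, and of super-polynomial cardinality. By Theorem~\ref{thm: RCNF=00306E=007C92=005EA6} each such product-irreducible $\widehat{\left[c\right]}$ contributes a distinct antecedent, indexed by its connected component, to $RCNF(M_{k})$, so the canonical resolution image is genuinely super-polynomial in $\left|M_{k}\right|$ and, by product-irreducibility, cannot be refactored through fewer resolutions sharing a joint variable.

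The step I expect to be the main obstacle is promoting this statement about the canonical map $F\mapsto RCNF(F)$ into a statement about \emph{every} polynomial reduction. Concretely I would need to show that any $RCNF$ instance satisfiability-equivalent to $M_{k}$ must still realise the full connected-component structure of its product-irreducible $\widehat{\left[c\right]}$, so that its size is bounded below by $\left|RCNF(M_{k})\right|$ up to a polynomial factor. Granting this, a polynomial reduction would produce a polynomial-size $RCNF$ instance; but $RCNF\in P$ by Theorem~\ref{thm: RCNF=00306EP=005B8C=005168=006027}, while $TCNF$, hence $CNF$, is $NP$-complete by Theorem~\ref{thm: TCNF=00306ENP=005B8C=005168=006027}, so such an instance would decide an $NP$-complete family in polynomial time and simultaneously violate the super-polynomial lower bound. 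The delicate point is the compression-invariance claim itself, namely ruling out a reduction that encodes the refutation more succinctly than the resolution topology permits; I would anchor it in the product-irreducibility established in Theorems~\ref{thm: TCNF=00306E=0065E2=007D04=006027} and~\ref{thm: CMUC=00306E=008907=0096D1=006027}, which guarantees these consequents admit no shorter factored description.
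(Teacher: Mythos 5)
Your decomposition matches the paper's: the equivalence $RCNF\equiv_{L}HornCNF$ is read off from the construction in Theorem~\ref{thm: RCNF=00306EP=005B8C=005168=006027} together with the remark after Definition~\ref{def: RCNF} that $RCNF\left(F\right)\in HornCNF$, and the non-reducibility claim is derived from the size bound of Theorem~\ref{thm: RCNF=00306E=00898F=006A21}. That half of your plan is routine and is exactly what the paper does.

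The obstacle you flag in your final paragraph is, however, a genuine gap, and it is not closed anywhere --- not in your proposal and not in the paper, whose entire argument for $CNF\nleq_{p}RCNF$ consists of citing Theorem~\ref{thm: RCNF=00306EP=005B8C=005168=006027} and Theorem~\ref{thm: RCNF=00306E=00898F=006A21} and stopping. Theorem~\ref{thm: RCNF=00306E=00898F=006A21} bounds only the size of the canonical image $RCNF\left(F\right)$; a polynomial-time many-one reduction from $CNF$ to $RCNF$ is under no obligation to factor through that map, to preserve the connected-component structure of the product-irreducible $\widehat{\left[c\right]}$ from Theorem~\ref{thm: RCNF=00306E=007C92=005EA6}, or indeed to bear any syntactic relation to the input formula beyond agreement of satisfiability. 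Since $RCNF$ is $P$-complete and nontrivial, the assertion $CNF\nleq_{p}RCNF$ is equivalent to $P\neq NP$ itself, so the ``compression-invariance'' step you isolate is not a technical loose end but the entire content of the theorem; an argument that only controls $\left|RCNF\left(F\right)\right|$ cannot establish it, and anchoring it in the product-irreducibility results of Theorems~\ref{thm: TCNF=00306E=0065E2=007D04=006027} and~\ref{thm: CMUC=00306E=008907=0096D1=006027} does not help, because those again constrain only the canonical resolution topology and not an arbitrary reduction. You were right to single this out as the point where the proof would fail.
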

\begin{proof}
$BA0(B$B=R(B\ref{thm: RCNF=00306EP=005B8C=005168=006027}$B$N(B$BDL(B$B$j(B$B!"(BRCNF$B$O(BP$B40(B$BA4(B$B$G(B$B$"(B$B$k(B$B!#(B$B$7(B$B$+(B$B$7(B$B!"(B$BA0(B$B=R(B\ref{thm: RCNF=00306E=00898F=006A21}$B$N(B$BDL(B$B$j(B$B!"(BCNF$B$O(B$BB?(B$B9`(B$B<0(B$B5,(B$BLO(B$B$N(BRCNF$B$K(B$BJQ(B$B49(B$B$G(B$B$-(B$B$J(B$B$$(B$BO@(B$BM}(B$B<0(B$B$,(B$BB8(B$B:_(B$B$9(B$B$k(B$B!#(B$B0J(B$B>e(B$B$h(B$B$j(B$BDj(B$BM}(B$B$,(B$B@.(B$B$j(B$BN)(B$B$D(B$B$3(B$B$H(B$B$,(B$B$o(B$B$+(B$B$k(B$B!#(B
\end{proof}
\appendix
\bibitem{Introduction to the Theory of COMPUTATION Second Edition}($BCx(B)Michael~Sipser,~($BLu(B)$BB@(B$BED(B~$BOB(B$BIW(B,~$BED(B$BCf(B~$B7=(B$B2p(B,~$B0$(B$BIt(B~$B@5(B$B9,(B,~$B?"(B$BED(B~$B9-(B$B<y(B,~$BF#(B$B2,(B~$B=_(B,~$BEO(B$BJU(B~$B<#(B,~$B7W(B$B;;(B$BM}(B$BO@(B$B$N(B$B4p(B$BAC(B~{[}$B86(B$BCx(B$BBh(B2$BHG(B{]},~2008


\begin{thebibliography}{1}
\bibitem{=008AD6=007406=003068=008A08=007B97=00306E=003057=00304F=00307F}($BCx(B)$BGk(B$BC+(B~$B>;(B$B8J(B,~$B@>(B$B:j(B~$B??(B$BLi(B,~$BO@(B$BM}(B$B$H(B$B7W(B$B;;(B$B$N(B$B$7(B$B$/(B$B$_(B,~2007\end{thebibliography}
\end{document}